\begin{document}
\title{A Proof of the Beierle-Kranz-Leander Conjecture related to Lightweight Multiplication in $\mathds{F}_{2^n}$}
\author{Sihem Mesnager\inst{1}  \and Kwang Ho Kim\inst{2,3} \and Dujin Jo\inst{4} \and Junyop Choe\inst{2} \and Munhyon Han\inst{2} \and Dok Nam Lee\inst{2}  } \institute{ LAGA, Department of Mathematics, University of Paris
VIII and Paris XIII,  CNRS and Telecom ParisTech, France
\email{smesnager@univ-paris8.fr} \and Institute of Mathematics,
State Academy of Sciences, Pyongyang, DPR Korea \and PGItech Corp.,
Pyongyang, DPR Korea \and  Rason Senior Middle School No.1, Rason, DPR Korea }

\maketitle

\begin{abstract}
Lightweight cryptography is a key tool for building strong security solutions for pervasive devices with limited resources. Due to the stringent cost constraints inherent in extremely large applications (ranging from RFIDs and smart cards to mobile devices), the efficient implementation of cryptographic hardware and software algorithms is of utmost importance to realize the vision of generalized computing.

In CRYPTO 2016,  Beierle, Kranz and Leander have considered lightweight multiplication in $\mathds{F}_{2^n}$. Specifically,  they have considered the fundamental question of optimizing finite field multiplications with one fixed element and  investigated which  field representation, that is which choice of basis, allows for an optimal implementation. They have left open a conjecture related to two XOR-count. Using the theory of linear algebra,  we prove in the present paper that their conjecture is correct. Consequently, this proved conjecture can be used as a reference for further developing and implementing cryptography algorithms in lightweight devices.\\

\noindent\textbf{Keywords:} Lightweight cryptography $\cdot$
constant multiplication $\cdot$ Hamming weight $\cdot$ XOR-count
$\cdot$ cycle normal form.
\end{abstract}

\section{Introduction}

The current  pervasive computing age has lead to an increased demand for security for applications ranging from RFIDs and smart cards to mobile devices. Lightweight cryptography is a key tool for building strong security solutions for pervasive devices with limited resources. These devices implement lightweight ciphers which are reliable and require low power and low computations. The lightweight cipher should be designed with fast encryption speed and minimal use of resources.
Due to the stringent cost constraints inherent in these extremely large applications, the efficient implementation of cryptographic hardware and software algorithms is of utmost importance to realize the vision of generalized computing. However, the computer complexity inherent in encryption algorithms poses a major challenge. Two surveys of lightweight cryptography can be found in \cite{Paar-et-al 2007} and \cite{Pawar-Pattanshetti 2018}.  In 2016, Beierle et al. \cite{BKL16} have considered Lightweight Multiplication in $\mathds{F}_{2^n}$. Specifically,  they have considered the fundamental question of optimizing finite field multiplications with one fixed element and  investigated which  field representation, that is which choice of basis, allows for an optimal implementation.

In this paper, the field $\mathds{F}_{2^n}$ is considered as the
$n$-dimensional vector space over the prime field $\mathds{F}_2$, and
given a basis $B$, every element of this vector space is uniquely
represented as a $\mathds{F}_2$-linear combination of elements in
the basis $B$. In particular, a multiplication by a fixed element
$\alpha\in \mathds{F}_{2^n}$ becomes a linear transformation over
the field $\mathds{F}_{2^n}$ and this can be identified with a
$n\times n$ matrix with entries in $\mathds{F}_2$. As this
representation of a matrix differs according to the choice of basis
of $\mathds{F}_{2^n}$, an efficiency of the multiplication with a
fixed element in $\mathds{F}_{2^n}$ depends on the choice of the
field representations, that is, the choice of $\mathds{F}_2$-basis
of $\mathds{F}_{2^n}$. The particular field representation has a
tremendous impact on the efficiency of the multiplication. Here the
efficiency of the multiplication is measured by the number of XOR
operations needed to implement the multiplication.

In \cite{BKL16}, Beierle et al. focused more specifically  on the optimal
implementation of multiplications with one given element over the
field $\mathds{F}_{2^n}$ and proposed a novel definition of
XOR-count to evaluate the efficiency of the multiplication, which is
more appropriate to consider the actual number of XOR operations
than the definition of the XOR-count proposed in \cite{FSE15}.

Considering the field representation with optimal implementation of
the multiplication by a fixed element $\alpha$ in
$\mathds{F}_{2^n}$, they present a remarkable result that the
multiplication by an element over $\mathds{F}_{2^n}$ can be
implemented with only one XOR-count if and only if the minimal
polynomial of the element is a trinomial of degree $n$. As a
generalization, they propose an open problem stated as
follows:

{\bf Conjecture} (Conjecture 1 of \cite{BKL16}): For an element
$\alpha\in \mathds{F}_{2^n}$ with two XOR-count, the minimal
polynomial $m_{\alpha}$ is of Hamming weight 5.


This open conjecture was based on their computer search to find
optimal bases for small fields of dimension smaller or equal to
eight. Their search results showed that the converse statement of
the conjecture is wrong.\\

In this paper, we prove the above conjecture. The remainder of this
paper is organized as follows. In Section 2, we introduce some
notation and definitions, and provide some propositions which are
useful for the later proofs. In Section 3 we prove that the
conjecture is correct.

\section{Preliminaries}

In this section, we introduce some notations and definitions. We
also survey some useful results in \cite{BKL16} for discussions in
next sections. Note that our description in this section closely
follows the one given in \cite{BKL16}.

\subsection{Notation and basic facts}

For a prime $p$, we denote the finite field with p elements by
$\mathds{F}_p$ and the extension field with $p^n$ elements by
$\mathds{F}_{p^n}$, respectively. In this work, we consider binary
fields, thus p = 2. Although there exists up to isomorphism only one
finite field for every possible order, we are interested in the
specific representation. For instance, if $q\in\mathds{F}_2[x]$ is
an irreducible polynomial of degree n, then $\mathds{F}_{2^n}\cong
\mathds{F}_2[x]/(q)$ where $(q)$ denotes the ideal generated by q.
The multiplicative group of a field K is denoted by $K^*$. By the
term \emph{matrix}, we refer to matrices with entries in
$\mathds{F}_{2}$. In general, the ring of $n\times n$ matrices over
a field K will be denoted by Mat$_n(K)$. The symbol $0_n$ will
denote the \emph{zero matrix} and $I_n$ will be the \emph{identity
matrix}. $E_{i,j}\in$ Mat$_n(\mathds{F}_2)$ denotes the matrix which
consists of all zeros except in the $i$-th row of the $j$-th column
for $i,j\in\{1,\ldots,n\}$. Also, $A^{(i,j)}\in
\emph{Mat}_n(\mathds{F}_2)$ denotes the $(n-1)\times (n-1)$
submatrix of $A$ formed by deleting the $i$-th row and $j$-th column
of $A$. In addition, $A^{(i_1,j_1)(i_2,j_2)}$ stands for
$\left(A^{(i_1,j_1)}\right)^{(i_2,j_2)}$ and when $i_1\neq i_2$ and
$j_1\neq j_2$, $A^{(i_1,j_1),(i_2,j_2)}$ denotes the $(n-2)\times
(n-2)$ submatrix obtained by deleting the $i_1, i_2$-th rows and
$j_1,j_2$-th columns in $A$. We denote a block diagonal matrix
consisting of $d$ matrix blocks $A_k$ as $\bigoplus_{k=1}^dA_k$. If
$P$ is a matrix with only one non-zero in each row and each column,
then $P$ is called a \emph{permutation matrix}. The characteristic
polynomial of a matrix $A$ is defined as $\chi_A:=$det$(A+\lambda
I)\in \mathds{F}_2[\lambda]$ and the minimal polynomial is denoted
by $m_A$. Recall that the minimal polynomial is the (monic)
polynomial $p$ of least degree, such that $p(A) = 0_n$. It is a
well-known fact that the minimal polynomial divides the
characteristic polynomial, thus $\chi_A(A)=0_n$. As the minimal
polynomial and the characteristic polynomial are actually properties
of the underlying linear mapping, similar matrices have the same
characteristic and the same minimal polynomial. By wt($A$), we
denote the number of non-zero entries of a matrix $A$. Analogously,
wt($q$) denotes the number of non-zero coefficients of a polynomial
$q$. For a polynomial of degree $n$
\begin{equation}\nonumber
q = x^n + q_{n-1} x^{n-1} + \cdots + q_1 x + q_0 \in
\mathds{F}_2[x],
\end{equation}
the \emph{companion matrix} of $q$ is defined as
\begin{equation}\nonumber
C_q = \left(
\begin{array}{ccccc}
0 & \        & \        & \   & \ q_0\\
1 & \ 0      & \        & \   & \ q_1\\
  & \ \ddots & \ \ddots & \   & \  \vdots\\
  & \        & \ 1      & \ 0 & \ q_{n-2}\\
  & \        & \        & \ 1 & \ q_{n-1}\\
\end{array}
\right).
\end{equation}
Now, it is well known that $\chi_{C_q} = m_{C_q} = q$.

For any two matrix $A$ and $A'$ in Mat$_n(\mathds{F}_2)$, if
$A'=TAT^{-1}$ for some invertible $T\in$ Mat$_n(\mathds{F}_2)$ then
$A$ and $A'$ are called \emph{similar} (resp.
\emph{permutation-similar} if T is a permutation matrix) and denoted
by $A\sim B$ (resp. $A\sim_{\pi} B$ for permutation-similarity).

The field $\mathds{F}_{2^n}$ can be considered as the
$n$-dimensional vector space over the field $\mathds{F}_2$, and
given a basis $B$, every element of this vector space is uniquely
represented as a $\mathds{F}_2$-linear combination of elements in
the basis $B$. In particular, a multiplication by a fixed element
$\alpha\in \mathds{F}_{2^n}$ becomes a linear transformation over
the field $\mathds{F}_{2^n}$ and this can be identified with a
$n\times n$ matrix with entries in $\mathds{F}_2$. This matrix
depends on the choice of basis of $\mathds{F}_{2^n}$, which is
denoted by $M_{\alpha, B}$. For any two bases $B$ and $B'$, there is
a invertible matrix $T$ called the \emph{matrix of basis
transformation} such that $M_{\alpha, B}=TM_{\alpha, B'}T^{-1}$.

\subsection{XOR-count and some useful propositions}

In \cite{FSE15}, it is considered that $A\in$ Mat$_n(\mathds{F}_2)$
has an XOR-count of $t$ if and only if $A$ can be written as
$A=P+\sum_{k=1}^{t} E_{i_k,j_k}$. However the such construction does
not reflect all possible matrices which can be implemented with at
most $t$ XOR operations. Authors of \cite{BKL16} provided a tight
definition for XOR-count.
\begin{definition}[\cite{BKL16}]\label{def.1}
If $t$ is the minimal number such that an invertible matrix $A$ can
be written as
\begin{equation}\nonumber
A=P\prod_{k=1}^{t}(I+E_{i_k,j_k})
\end{equation}
with $i_k \neq j_k$ for all $k$, $A$ has XOR-count of $t$, denoted
by wt$_{\oplus}(A)=t$, where $P$ is a permutation matrix.
\end{definition}

In the above definition, note that the number of factors $(I+E_{i_k,j_k})$
gives an upper bound on the actual XOR-count. In other words, if $A$
can be written as $A=P\prod_{k=1}^{t}(I+E_{i_k,j_k})$, i.e.
wt$_{\oplus}(A)=t$ then the multiplication represented by $A$ can be
implemented with at most $t$ XOR operations.

Given $\alpha \in \mathds{F}_{2^n}$, an XOR-count of $\alpha$ is
defined as the minimal XOR-count of matrices $M_{\alpha, B}$
represented the multiplication by $\alpha$ with respect to any basis
$B$.\\

Below, we introduce some propositions presented in \cite{BKL16} with
their proofs.

\begin{proposition}[\cite{BKL16}]\label{pro.1}
If $A \sim_{\pi} A'$ then $wt_{\oplus}(A) = wt_{\oplus}(A')$.
\end{proposition}
\begin{proof} Let $wt_{\oplus}(A) = t$ and $A' = QAQ^{-1}$ for a
permutation matrix $Q$ which represents the permutation $\sigma \in
S_n$. Then $A=P\prod_{k=1}^{t}(I+E_{i_k,j_k})$ and we have
$A'=QPQ^{-1}\prod_{k=1}^t(I+E_{\sigma(i_k), \sigma^{-1}(j_k)})$
since $(I+E_{i_k,j_k})Q^{-1} = Q^{-1} + E_{i_k, \sigma^{-1}(j_k)} =
Q^{-1}(I+E_{\sigma(i_k), \sigma^{-1}(j_k)})$. It follows that
$wt_{\oplus}(A') \leq wt_{\oplus}(A)$ and by reverting the above
steps we obtain $wt_{\oplus}(A) \leq wt_{\oplus}(A')$. \qed
\end{proof}
\begin{proposition}[\cite{BKL16}]\label{pro.2}
$wt_{\oplus}(A) = wt_{\oplus}(A^{-1})$.
\end{proposition}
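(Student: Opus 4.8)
The plan is to show that inverting $A$ preserves both the ``permutation times transvections'' shape of Definition~\ref{def.1} and the number of factors, giving $wt_{\oplus}(A^{-1})\le wt_{\oplus}(A)$; the reverse inequality then follows by symmetry. The first ingredient is that each elementary factor is an involution over $\mathds{F}_2$: for $i\neq j$ one computes $E_{i,j}E_{i,j}=0_n$, so that $(I+E_{i,j})^2=I+2E_{i,j}+E_{i,j}^2=I$ and hence $(I+E_{i,j})^{-1}=I+E_{i,j}$. Writing $t=wt_{\oplus}(A)$ and $A=P\prod_{k=1}^{t}(I+E_{i_k,j_k})$ with all $i_k\neq j_k$, inverting the product reverses the order of the factors and leaves each factor unchanged, so
\begin{equation}\nonumber
A^{-1}=\Bigl(\prod_{k=t}^{1}(I+E_{i_k,j_k})\Bigr)P^{-1},
\end{equation}
where $P^{-1}$ is again a permutation matrix.

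The second ingredient is that conjugation by a permutation matrix sends a transvection to a transvection. If $P$ realizes the permutation $\pi$, a direct computation gives $PE_{i,j}P^{-1}=E_{\pi(i),\pi(j)}$, hence $P(I+E_{i,j})P^{-1}=I+E_{\pi(i),\pi(j)}$. Using $P\left(\prod_k X_k\right)P^{-1}=\prod_k\bigl(PX_kP^{-1}\bigr)$ to push $P^{-1}$ all the way to the left of the displayed expression, I would rewrite
\begin{equation}\nonumber
A^{-1}=P^{-1}\prod_{k=t}^{1}\bigl(I+E_{\pi(i_k),\pi(j_k)}\bigr).
\end{equation}
This exhibits $A^{-1}$ in the canonical form of Definition~\ref{def.1} with leading permutation matrix $P^{-1}$ and exactly $t$ elementary factors, whence $wt_{\oplus}(A^{-1})\le t=wt_{\oplus}(A)$. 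Applying the same argument to $A^{-1}$ and using $(A^{-1})^{-1}=A$ yields $wt_{\oplus}(A)\le wt_{\oplus}(A^{-1})$, so the two quantities coincide.

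The step that requires care is checking that the rewritten factors are genuine transvections, that is, that the index constraint $i_k\neq j_k$ survives the manipulation and no factor degenerates into a diagonal $I+E_{i,i}$. This is precisely where it matters that the single permutation $\pi$ acts on \emph{both} indices of each factor: since $\pi$ is a bijection, $i_k\neq j_k$ forces $\pi(i_k)\neq\pi(j_k)$, so the constraint is automatically preserved. The two-sided conjugation by $P$ is therefore the crucial mechanism, mirroring the relabelling already exploited in the proof of Proposition~\ref{pro.1}; once it is in place, the equality of the XOR-counts is immediate.
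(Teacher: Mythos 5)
Your proof is correct and follows essentially the same route as the paper: invert $A$ using the involution property of $I+E_{i,j}$ over $\mathds{F}_2$, then move $P^{-1}$ to the front by permutation conjugation, which relabels each factor's indices without changing their number or violating $i_k\neq j_k$. The paper packages this last step as an appeal to permutation-similarity (Proposition~\ref{pro.1}) rather than carrying out the conjugation explicitly, but the underlying mechanism is identical.
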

\begin{proof} Using the fact that the matrix $I+E_{i,j}$ with $i \neq
j$ is an involution and Proposition \ref{pro.1}, we get the
result.\\
$\left(P\prod_{k=1}^t(I+E_{i_k,j_k})\right)^{-1} =
\prod_{k=t}^1(I+E_{i_k,j_k})P^{-1} \sim_{\pi}
P^{-1}\prod_{k=t}^1(I+E_{i_k,j_k})$. \qed
\end{proof}
\begin{proposition}[\cite{BKL16}]\label{pro.3}
For any $n$-dimensional permutation matrix $P$,
\begin{equation}\nonumber
P \sim_{\pi} \bigoplus _{k=1}^d C_{x^{m_k}+1}
\end{equation}
for some $m_k$ with $\sum_{k=1}^d m_k = n$ and $m_1 \geq \cdots \geq
m_d \geq 1$.
\end{proposition}
\begin{proof} It is well-known that two permutations with the same
cycle type are conjugate \cite{Abstract alg}. That is, given the
permutations $\sigma, \tau \in S_n$ as

\begin{equation}\begin{aligned}\nonumber
\sigma =
(s_1,s_2,\cdots,s_{d_1})&(s_{d_1+1},\cdots,s_{d_2})\cdots(s_{d_{m-1}+1},\cdots,s_{d_m})\\
\tau =
(t_1,t_2,\cdots,t_{d_1})&(t_{d_1+1},\cdots,t_{d_2})\cdots(t_{d_{m-1}+1},\cdots,t_{d_m})
\end{aligned}\end{equation}
in cycle notation, one can find some $\pi\in S_n$ such that
$\pi\sigma\pi^{-1}=\tau$. This $\pi$ operates as a relabeling of
indices.

Let $\sigma$ in the form above be the permutation defined by $P$.
Now, there exits a permutation $\pi$ such that $\pi\sigma\pi^{-1} =
(d_1,1,2,\ldots,d_1-1)(d_2,d_1+1,d_1+2,\ldots,d_2-1) \cdots
(d_m,d_{m-1}+1,d_{m-1}+2,\ldots,d_m-1)$. If $Q$ denotes the
permutation matrix defined by $\pi$, one obtains $QPQ^{-1}$ in the
desired form. \qed
\end{proof}

We say that any permutation matrix of this structure is in cycle
normal form. The cycle normal form of $P$ is denoted by $C(P)$. Up
to permutation-similarity, we can always assume that the permutation
matrix $P$ of a given matrix with XOR-count $t$ is in cycle normal
form, as stated in the following corollary.

\begin{corollary}[\cite{BKL16}]\label{cor.1}
\begin{equation}\nonumber
P\prod_{k=1}^{t}(I+E_{i_k,j_k}) \sim_{\pi}
C(P)\prod_{k=1}^{t}(I+E_{\sigma(i_k),\sigma^{-1}(j_k)}),
\end{equation}
for some permutation $\sigma\in S_n$. Here, we note that
$\sigma(i_k)\neq\sigma^{-1}(j_k)$ from the invertibility of the
given matrix.
\end{corollary}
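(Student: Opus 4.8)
The plan is to obtain the corollary as a direct application of the conjugation computation already carried out in the proof of Proposition \ref{pro.1}. By Proposition \ref{pro.3}, the permutation matrix $P$ is permutation-similar to its cycle normal form, so there exists a permutation matrix $Q$, representing a permutation $\sigma \in S_n$, with $C(P) = QPQ^{-1}$. Writing $A = P\prod_{k=1}^t(I+E_{i_k,j_k})$, I would set $A' = QAQ^{-1}$, so that $A \sim_{\pi} A'$ by construction, and the whole task reduces to bringing $A'$ into the asserted shape.

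First I would push the conjugating factor $Q^{-1}$ rightwards through the product, using the identity $(I+E_{i_k,j_k})Q^{-1} = Q^{-1}(I+E_{\sigma(i_k),\sigma^{-1}(j_k)})$ that was established inside the proof of Proposition \ref{pro.1}. Applying it $t$ times transforms $QP\prod_{k=1}^t(I+E_{i_k,j_k})Q^{-1}$ into $QPQ^{-1}\prod_{k=1}^t(I+E_{\sigma(i_k),\sigma^{-1}(j_k)})$. Since $QPQ^{-1}=C(P)$, this is exactly the claimed right-hand side, and the permutation-similarity of the two sides is immediate from $A' = QAQ^{-1}$. Everything here is routine index bookkeeping.

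The one point that requires a separate argument --- and the only genuine obstacle --- is the closing remark that $\sigma(i_k)\neq\sigma^{-1}(j_k)$ for every $k$. This does \emph{not} follow from the hypothesis $i_k\neq j_k$, because $\sigma$ and $\sigma^{-1}$ act on the two indices differently, so one cannot simply transport the inequality along. Instead I would deduce it from invertibility: the matrix $A$ is a product of the invertible matrices $P$ and $I+E_{i_k,j_k}$ (each elementary factor is an involution, hence invertible), so $A$ is invertible, and therefore so is $A'=QAQ^{-1}$. Multiplying $A'$ on the left by $C(P)^{-1}$ shows that $\prod_{k=1}^t(I+E_{\sigma(i_k),\sigma^{-1}(j_k)})$ is invertible; over a field a product of square matrices is invertible only if each factor is, so every $I+E_{\sigma(i_k),\sigma^{-1}(j_k)}$ is invertible. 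Finally, $I+E_{a,b}$ is singular when $a=b$, since then its $a$-th row vanishes because $1+1=0$ in $\mathds{F}_2$; hence invertibility forces $\sigma(i_k)\neq\sigma^{-1}(j_k)$, which completes the argument.
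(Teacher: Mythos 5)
Your proposal is correct and follows exactly the route the paper intends: the corollary is the combination of Proposition \ref{pro.3} (giving $Q$ with $C(P)=QPQ^{-1}$) with the commutation identity $(I+E_{i_k,j_k})Q^{-1}=Q^{-1}(I+E_{\sigma(i_k),\sigma^{-1}(j_k)})$ already established in the proof of Proposition \ref{pro.1}, plus the observation that invertibility of the whole product rules out $\sigma(i_k)=\sigma^{-1}(j_k)$ because $I+E_{a,a}$ is singular over $\mathds{F}_2$. Your explicit justification of that last remark (which the paper only asserts in passing) is a welcome addition rather than a deviation.
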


The following theorem which is a main theoretical result of
\cite{BKL16} characterizes elements with a lowest XOR-count over
$\mathds{F}_{2^n}$.

\begin{theorem}[\cite{BKL16}]\label{theo.1}
Let $\alpha\in \mathds{F}_{2^n}$. Then wt$_\oplus(M_{\alpha, B}) =
1$ for some basis $B$ if and only if $m_{\alpha}$ is a trinomial of
degree $n$.
\end{theorem}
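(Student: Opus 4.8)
The plan is to translate the XOR-count-one condition into a statement about the characteristic and minimal polynomials of the multiplication map. The two facts I would lean on throughout are that changing the basis $B$ replaces $M_{\alpha,B}$ by a similar matrix, and that $\mathrm{wt}_{\oplus}$ is invariant under permutation-similarity (Proposition \ref{pro.1}), so that by Corollary \ref{cor.1} any witness of XOR-count $1$ may be taken in the normalized shape $A=P(I+E_{i,j})$ with $P$ in cycle normal form.

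For the sufficiency direction, suppose $m_\alpha=x^n+x^k+1$ with $1\le k\le n-1$. Since $\deg m_\alpha=n$, the element $\alpha$ generates $\mathds{F}_{2^n}$, the polynomial basis $B=\{1,\alpha,\dots,\alpha^{n-1}\}$ is available, and $M_{\alpha,B}=C_{m_\alpha}$. First I would check the explicit identity
\begin{equation}\nonumber
C_{x^n+x^k+1}=C_{x^n+1}\,(I+E_{k,n}),
\end{equation}
which holds because right multiplication by $I+E_{k,n}$ adds column $k$ of the cyclic matrix $C_{x^n+1}$ (namely $e_{k+1}$) to its last column (namely $e_1$), reproducing exactly the two nonzero entries in the last column of the companion matrix. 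This gives $\mathrm{wt}_{\oplus}(M_{\alpha,B})\le1$, and since $C_{m_\alpha}$ has a column of weight two it is not a permutation matrix, so the count is exactly $1$.

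For the necessity direction I would set $A:=M_{\alpha,B}=P(I+E_{i,j})$ with $P=\bigoplus_{k=1}^{d}C_{x^{m_k}+1}$. Writing the correction as a rank-one update $PE_{i,j}=e_{\sigma(i)}e_j^{\top}$, the matrix determinant lemma in adjugate form yields
\begin{equation}\nonumber
\chi_A(\lambda)=\chi_P(\lambda)+\det\!\big((\lambda I+P)^{(\sigma(i),\,j)}\big),\qquad \chi_P=\prod_{k=1}^{d}(\lambda^{m_k}+1).
\end{equation}
Since $\lambda I+P$ is block diagonal, so is its adjugate; hence the minor vanishes when $\sigma(i)$ and $j$ lie in different blocks (giving $\chi_A=\chi_P$), and when they lie in a common block $b$ it carries the factor $\prod_{k\ne b}(\lambda^{m_k}+1)$. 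In either situation $d\ge2$ would force $\lambda+1\mid\chi_A$; but $A$ represents multiplication by a field element, so $m_\alpha=m_A$ is irreducible and $\chi_A=m_\alpha^{\,n/\deg m_\alpha}$ is a power of a single irreducible, forcing $m_\alpha=\lambda+1$, i.e.\ $\alpha=1$, which contradicts $\mathrm{wt}_{\oplus}=1$. Hence $d=1$ and $P=C_{x^n+1}$.

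With $P=C:=C_{x^n+1}$ I would exploit $C^{n}=I$ through the telescoping identity $(\lambda I+C)\sum_{s=0}^{n-1}\lambda^{n-1-s}C^{s}=(\lambda^n+1)I$, which identifies $\operatorname{adj}(\lambda I+C)=\sum_{s=0}^{n-1}\lambda^{n-1-s}C^{s}$; reading off the single relevant entry gives $\chi_A(\lambda)=\lambda^n+\lambda^{k}+1$ with $k=n-1-\big((j-\sigma(i))\bmod n\big)$, and invertibility of $A$ (constant term $1$) forces $1\le k\le n-1$, so $\chi_A$ is a genuine degree-$n$ trinomial. The last step is to promote this to $m_\alpha$: relabelling so that $\sigma(\ell)=\ell+1 \pmod n$, the iterates $A^{m}e_{j+1}=e_{j+1+m}$ for $0\le m\le n-1$ sweep through all standard basis vectors before the exceptional column $j$ is ever fed in, so $e_{j+1}$ is a cyclic vector and $A$ is non-derogatory; therefore $m_\alpha=m_A=\chi_A=\lambda^n+\lambda^{k}+1$ is a trinomial of degree $n$. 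I expect the main obstacle to be exactly this necessity direction — producing the closed form of $\chi_A$ and then bridging from $\chi_A$ to $m_\alpha$, since a trinomial can be a reducible prime power such as $x^4+x^2+1=(x^2+x+1)^2$; it is precisely the cyclic-vector argument together with the given irreducibility of $m_\alpha$ that closes this gap.
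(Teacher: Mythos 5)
Your proof is correct, but there is a structural point to make first: the paper does not prove this statement at all. Theorem \ref{theo.1} is quoted from \cite{BKL16} as background; the paper's own contribution begins with Theorem \ref{theo.me9} (the two-XOR case). So the only meaningful comparison is with the machinery the paper builds for that case, and your argument is essentially its one-XOR specialization, assembled independently. Your sufficiency step (the factorization $C_{x^n+x^k+1}=C_{x^n+1}(I+E_{k,n})$, plus the observation that a companion matrix of a trinomial is not a permutation matrix) is fine. Your computation of $\chi_A$ via the rank-one update $PE_{i,j}=e_{\sigma(i)}e_j^{\top}$ and the adjugate identity $\operatorname{adj}(\lambda I+C)=\sum_{s=0}^{n-1}\lambda^{n-1-s}C^s$ reproduces exactly Corollary \ref{cor.me13}, which the paper instead derives from Proposition \ref{pro.me12}(ii)--(iii) by expanding $\det\big((C_{x^n+1}+\lambda I)+\lambda E_{i,j}\big)$; your exponent $k=i-j \bmod n$ agrees with \eqref{eq.me2}. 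Your block-count reduction (if $d\geq 2$ then $\lambda+1\mid\chi_A$, hence $m_\alpha=\lambda+1$ by Proposition \ref{pro.me10} and irreducibility, hence $\alpha=1$, contradiction) is precisely the argument of Lemma \ref{lem.me14} in the case of one factor $I+E_{i,j}$.

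The one ingredient you supply that has no verbatim counterpart in the paper is the bridge from $\chi_A$ to $m_\alpha$: you show $A$ is non-derogatory by exhibiting $e_{\sigma(j)}$ as a cyclic vector, so $m_\alpha=\chi_A$ has degree $n$. You are right that this step is essential --- a trinomial characteristic polynomial can be a proper power of a lower-degree irreducible, as with $x^4+x^2+1=(x^2+x+1)^2$, and the paper's analogous bound in the two-XOR proof ($d\leq 2$ via linearly dependent sublists of basis elements) would not suffice here, since $d=2$ would leave $m_\alpha$ of degree $n/2$. Note, though, that your cyclic-vector argument is exactly the matrix-level form of that same paper technique: with $P=C_{x^n+1}$, the basis relations \eqref{eq.me7} with a single exceptional column give $(b_{j+1},\dots,b_n,b_1,\dots,b_j)=(\gamma,\alpha\gamma,\dots,\alpha^{n-1}\gamma)$, and linear independence of a basis forces $\deg m_\alpha\geq n$ directly; so the two formulations are interchangeable. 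One cosmetic remark: the bound $1\leq k\leq n-1$ follows already from $i\neq j$ (which rules out $s^{*}=n-1$), not really from invertibility, though your invertibility argument also closes that case. Overall: correct, complete, and faithful to the paper's toolkit, with the non-derogatory step being the right way to finish a proof the paper never gives.
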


The theorem shows that one cannot hope to implement the constant
multiplication with only one XOR-count over fields
$\mathds{F}_{2^n}$ for such $n$ that there is no any irreducible
trinomial of degree $n$, for example, multiple of 8. For extension
fields of such degree, we can expect the XOR-count of 2 as optimal
implementation. In below, a conjecture suggested by authors of
\cite{BKL16} is introduced.

\begin{conjecture}[Conjecture 1 of \cite{BKL16}]\label{conj.1}
If wt$_\oplus(M_{\alpha, B}) = 2$ for some basis $B$ then
$m_{\alpha}$ is of weight smaller or equal to 5.
\end{conjecture}

Note that the converse of the conjectured statement is wrong as seen
in Tables of \cite{BKL16}.

\section{A proof of the conjecture}

In this section, we will prove that the conjecture is correct. The
following theorem, which is a major task of this work, formalizes
the conjecture again.

\begin{theorem}\label{theo.me9}
Conjecture \ref{conj.1} is true: if wt$_\oplus(M_{\alpha, B}) = 2$
for some basis $B$ then $m_{\alpha}$ is of weight smaller or equal
to 5.
\end{theorem}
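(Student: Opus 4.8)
The plan is to prove the statement directly by computing the minimal polynomial of the multiplication matrix in a normalized form. First I would invoke Corollary \ref{cor.1}: since permutation-similarity preserves the minimal polynomial, the hypothesis $\text{wt}_\oplus(M_{\alpha,B}) = 2$ lets me replace $M_{\alpha,B}$ by a matrix $A = P\,(I+E_{i_1,j_1})(I+E_{i_2,j_2})$ with $P = \bigoplus_{k=1}^d C_{x^{m_k}+1}$ in cycle normal form, so that $m_\alpha = m_A$. I would also record two structural facts used throughout: $\chi_A = m_\alpha^{\,n/\deg m_\alpha}$ because $A$ represents multiplication by $\alpha$, so $m_\alpha$ is the unique irreducible factor of $\chi_A$; and, writing $U = (I+E_{i_1,j_1})(I+E_{i_2,j_2}) = I + E_{i_1,j_1} + E_{i_2,j_2} + [j_1=i_2]\,E_{i_1,j_2}$ and $R = U-I$, the matrix $A = P + PR$ differs from the permutation matrix $P$ only in entries lying in rows $\pi(i_1),\pi(i_2)$ and columns $j_1,j_2$, where $\pi$ is the permutation of $P$.

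Next I would set up the determinant computation. Expanding $\chi_A = \det(\lambda I + A) = \det(\lambda I + P + PR)$ multilinearly in the at most two perturbed rows reduces $\chi_A$ to $\chi_P$ plus a sum of cofactors of $\lambda I + P$. The crucial simplification is that these cofactors are monomials: from $(\lambda I + C_{x^m+1})\big(\sum_{s=0}^{m-1}\lambda^{m-1-s}C_{x^m+1}^s\big) = (\lambda^m+1)I$ over $\mathds{F}_2$, the adjugate of a single cyclic block $\lambda I + C_{x^m+1}$ has every entry equal to a single power of $\lambda$, and a cofactor of $\lambda I + P$ deleting a row and a column in different blocks vanishes. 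Hence $\chi_P = \prod_{k=1}^d(\lambda^{m_k}+1)$ and each correction term is a product of such monomials with the binomials $(\lambda^{m_k}+1)$ coming from the blocks it leaves intact.

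This yields the key reduction. Any block $k$ containing none of the four indices $\pi(i_1),\pi(i_2),j_1,j_2$ is an isolated diagonal block of $\lambda I + A$, so $(\lambda^{m_k}+1) \mid \chi_A = m_\alpha^{\,n/\deg m_\alpha}$; since $\lambda + 1 \mid \lambda^{m_k}+1$ and $m_\alpha$ is irreducible, this would force $m_\alpha = \lambda + 1$, i.e. $\alpha \in \mathds{F}_2$ and $\text{wt}_\oplus = 0$, contradicting the hypothesis. Therefore \emph{no block is untouched}, so $d \le 4$ and the problem collapses to a finite list of configurations, indexed by how $\{\pi(i_1),\pi(i_2)\}$ and $\{j_1,j_2\}$ distribute among the blocks and by whether $j_1 = i_2$ (the chained case, contributing the third perturbation entry). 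In each configuration $\chi_A$ becomes an explicit expression in $\lambda^{m_1},\dots,\lambda^{m_d}$ built from a few binomials $(\lambda^{m_k}+1)$ and monomial corrections, from which I would read off the repeated irreducible factor $m_\alpha$ and check $\text{wt}(m_\alpha) \le 5$, the extremal configurations producing pentanomials.

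I expect the main obstacle to be this final case analysis, for two reasons. First, one must genuinely extract $m_\alpha$ rather than stop at $\chi_A$: in the derogatory case $e = n/\deg m_\alpha > 1$ the characteristic polynomial is a proper power, and when $e$ is not a power of $2$ the passage from $\chi_A$ to $m_\alpha$ does not preserve weight, so I would instead determine $m_\alpha$ from the invariant factors of the explicit small matrix $A$ in each family. Second, the cross-block (coupled) configurations are the delicate ones: there the monomial cofactors from two or three distinct blocks multiply together and must be recombined with the intact-block binomials so as to collapse to at most five surviving terms. Verifying that no configuration exceeds weight $5$, and exhibiting those that attain it, is the real content behind the conjecture.
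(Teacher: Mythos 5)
Your setup is sound and in fact mirrors the paper's own: normalization to cycle normal form via Corollary~\ref{cor.1}, the identity $\chi_A=(m_\alpha)^{n/\deg m_\alpha}$ (the paper's Proposition~\ref{pro.me10}), the observation that cofactors of $\lambda I+P$ are monomials inside a block and vanish across blocks (your adjugate computation is a clean derivation of Proposition~\ref{pro.me12}(iii)), and the untouched-block contradiction (the paper's Lemma~\ref{lem.me14}, which actually gives the sharper bound $s\le 2$ rather than your $d\le 4$, because the perturbations of $P$ sit in only two \emph{columns}, $j_1$ and $j_2$). All of this correctly yields wt$(\chi_A)\le 5$ in each configuration, or a trinomial factor in the split two-block configuration.

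However, there is a genuine gap exactly where you flag it: the derogatory case. Knowing wt$(\chi_A)\le 5$ and $\chi_A=(m_\alpha)^e$ gives no bound on wt$(m_\alpha)$ when $e\ge 3$ is not a power of two, and your proposed remedy --- ``determine $m_\alpha$ from the invariant factors of the explicit small matrix $A$ in each family'' --- is a statement of intent, not an argument. The matrices in question are not small: they are $n\times n$ with unbounded block sizes and unbounded perturbation positions, so computing invariant factors of $\lambda I+A$ over $\mathds{F}_2[\lambda]$ for these parametric families is essentially as hard as the conjecture itself, and you give no indication of how to carry it out. The paper closes this gap by exploiting precisely the information your reduction discards: $A$ is not an arbitrary matrix of this shape, it is $M_{\alpha,B}$ for a basis $B$. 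Reading off the unperturbed columns shows that $B$ contains runs $\gamma,\alpha\gamma,\alpha^2\gamma,\dots$ of consecutive $\alpha$-multiples: in the one-block case one of the two runs has length $>n/2$, and in the two-block crossed case the run inside the larger block has length $m_1\ge n/2$. If $e\ge 3$, then $m=\deg m_\alpha\le n/3$, so such a run contains the $m+1$ elements $\gamma,\alpha\gamma,\dots,\alpha^{m}\gamma$, which satisfy $m_\alpha(\alpha)\gamma=0$ --- a linear dependence among basis elements, a contradiction. Hence $e\le 2$, and since squaring preserves Hamming weight in characteristic $2$, wt$(m_\alpha)=\,$wt$(\chi_A)\le 5$. (In the remaining two-block configuration, where the determinant factors and one factor equals $(m_\alpha)^{d_1}$ with weight at most $3$, the paper instead invokes Proposition~\ref{pro.me11}.) Without this step, or a worked-out substitute for it, your case analysis cannot be completed.
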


As a preliminary for proving, we provide some necessary facts and
brief computations.

\begin{proposition}\label{pro.me0}
Elements with the same minimal polynomial have the same XOR-count.
\end{proposition}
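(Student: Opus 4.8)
The plan is to reduce the statement to two standard facts from linear algebra: that the matrices $M_{\alpha,B}$ obtained from different bases $B$ form exactly one similarity class, and that this similarity class is completely pinned down by $m_{\alpha}$. First I would unwind the definition of the XOR-count of $\alpha$. Changing the basis from $B$ to $B'$ replaces $M_{\alpha,B}$ by $TM_{\alpha,B}T^{-1}$ for the basis-transformation matrix $T$, so the set $\{M_{\alpha,B}\colon B\text{ a basis of }\mathds{F}_{2^n}\}$ is precisely the similarity class of the $\mathds{F}_2$-linear map $L_\alpha\colon x\mapsto\alpha x$ in $\mathrm{Mat}_n(\mathds{F}_2)$. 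Consequently the XOR-count of $\alpha$, being $\min_B wt_{\oplus}(M_{\alpha,B})$, depends only on the similarity class of $L_\alpha$.

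Next I would show that $m_\alpha=m_\beta$ forces $L_\alpha\sim L_\beta$. The point is that the similarity class of $L_\alpha$ is determined by $m_\alpha$ and $n$ alone. Set $d=\deg m_\alpha$; since $\mathds{F}_2[\alpha]\cong\mathds{F}_{2^d}$ is a subfield of $\mathds{F}_{2^n}$, we have $d\mid n$ and $\mathds{F}_{2^n}$ is a free $\mathds{F}_{2^d}$-module of rank $n/d$. Regarding $\mathds{F}_{2^n}$ as an $\mathds{F}_2[x]$-module with $x$ acting as $L_\alpha$, this module is isomorphic to $\bigl(\mathds{F}_2[x]/(m_\alpha)\bigr)^{n/d}$, so every invariant factor equals $m_\alpha$. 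Equivalently, the rational canonical form of $L_\alpha$ is the block diagonal matrix $\bigoplus_{k=1}^{n/d}C_{m_\alpha}$, which depends only on $m_\alpha$ and $n$. The identical computation applies to $L_\beta$, and since $m_\alpha=m_\beta$ the two maps have the same rational canonical form; by uniqueness of that form, $L_\alpha\sim L_\beta$.

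Finally, because $L_\alpha$ and $L_\beta$ lie in the same similarity class, their sets of basis representations coincide, whence $\min_B wt_{\oplus}(M_{\alpha,B})=\min_B wt_{\oplus}(M_{\beta,B})$, i.e. $\alpha$ and $\beta$ have the same XOR-count. I expect the middle step to be the only real obstacle: establishing that all invariant factors of $L_\alpha$ equal $m_\alpha$. This rests on the freeness of $\mathds{F}_{2^n}$ as an $\mathds{F}_{2^d}$-module, which itself relies on $d\mid n$; once this structural fact is in hand, the conclusion follows immediately from the uniqueness of the rational canonical form.
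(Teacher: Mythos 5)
Your proof is correct, but it takes a genuinely different route from the paper's. The paper argues via Galois conjugation: since $\alpha$ and $\beta$ are roots of the same irreducible polynomial, some $\sigma\in\mathrm{Gal}(\mathds{F}_{2^n}/\mathds{F}_2)$ sends $\alpha$ to $\beta$; applying $\sigma$ to the defining relations of $M_{\alpha,B}$ shows $M_{\alpha,B}=M_{\beta,\sigma(B)}$, so every basis realizing a given matrix for $\alpha$ is carried to a basis realizing the \emph{same} matrix for $\beta$, and the minimal XOR-counts coincide. You instead stay entirely inside linear algebra: you identify $\{M_{\alpha,B}\}_B$ with the full similarity class of $L_\alpha$, then pin that class down by computing the invariant factors of $\mathds{F}_{2^n}$ as an $\mathds{F}_2[x]$-module (all equal to $m_\alpha$, using that $\mathds{F}_{2^n}$ is free of rank $n/d$ over $\mathds{F}_2[\alpha]\cong\mathds{F}_2[x]/(m_\alpha)$), so the rational canonical form $\bigoplus_{k=1}^{n/d}C_{m_\alpha}$ depends only on $m_\alpha$ and $n$; uniqueness of that form gives $L_\alpha\sim L_\beta$. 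The paper's argument is shorter and yields equal (not merely similar) matrices, at the cost of invoking the Galois theory of finite fields. Your argument avoids Galois theory altogether and proves something stronger along the way: the similarity class of $L_\alpha$, and in particular the characteristic polynomial $(m_\alpha)^{n/d}$, is determined by $m_\alpha$ alone --- which is precisely the content of the paper's Proposition~\ref{pro.me10}, so your route establishes that proposition simultaneously.
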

\begin{proof}
Let $\alpha, \beta$ be different roots of irreducible polynomial
$f(x)\in \mathds{F}_2[x]$ of degree $m$ and let
$B=\{b_1,\dots,b_m\}$ be a $\mathds{F}_2$-basis of
$\mathds{F}_{2^m}$ such that
wt$_\oplus(\alpha)$=wt$_\oplus(M_{\alpha,B})$, i.e. which gives the
lowest XOR-count for the multiplication with $\alpha$. And let
$\sigma\in Gal(\mathds{F}_{2^m}/\mathds{F}_2)$ map $\alpha$ to
$\beta$. From the definition of the matrix of the linear
transformation,
\begin{equation}\nonumber
\left(\begin{array}{c}
\alpha b_1 \\
\vdots\\
\alpha b_m
\end{array}\right)=M_{\alpha, B}
\left(\begin{array}{c}
b_1 \\
\vdots\\
b_m
\end{array}\right)
\end{equation}
and by the action of $\sigma$, we have
\begin{equation}\nonumber
\left(\begin{array}{c}
\beta \sigma(b_1) \\
\vdots\\
\beta \sigma(b_m)
\end{array}\right)=M_{\alpha, B}
\left(\begin{array}{c}
\sigma(b_1) \\
\vdots\\
\sigma(b_m)
\end{array}\right),
\end{equation}
which shows that $M_{\alpha,B}=M_{\beta, \sigma(B)}$ since
$\sigma(B)=\{\sigma(b_1),\dots,\sigma(b_m)\}$ is also a basis of
$\mathds{F}_{2^m}$. Thus $\textrm{wt}_\oplus(\beta)\leq
\textrm{wt}_{\oplus}(\alpha)$. Similarly, we can also get
$\textrm{wt}_\oplus(\alpha)\leq \textrm{wt}_{\oplus}(\beta)$.
\qed\end{proof}

\begin{proposition}\label{pro.me10}
Let $\alpha \in \mathds{F}_{2^n}$ be algebraic element of degree
$m$, i.e.
$\mathds{F}_2(\alpha)=\mathds{F}_{2^m}\subset\mathds{F}_{2^n}$ and
let $B$ be any basis of $\mathds{F}_{2^n}$ over $\mathds{F}_2$. Then
$\chi_{M_{\alpha,B}} = (m_\alpha)^d$, where $d=n/m$.
\end{proposition}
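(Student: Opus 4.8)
The plan is to pin down the minimal polynomial of the matrix $M_{\alpha,B}$ and then exploit the rigidity that an \emph{irreducible} minimal polynomial imposes on the characteristic polynomial. The starting observation is that, for a fixed basis $B$, the assignment $\beta \mapsto M_{\beta, B}$ is an $\mathds{F}_2$-algebra homomorphism from $\mathds{F}_{2^n}$ into $\mathrm{Mat}_n(\mathds{F}_2)$: it is additive, and it sends a product $\beta\gamma$ to $M_{\beta,B}M_{\gamma,B}$ because composing ``multiply by $\gamma$'' with ``multiply by $\beta$'' is ``multiply by $\beta\gamma$''. Consequently, for every polynomial $p \in \mathds{F}_2[x]$ one has $p(M_{\alpha,B}) = M_{p(\alpha), B}$, and since $\beta \mapsto M_{\beta,B}$ is injective (it is a nonzero homomorphism out of a field) this matrix equals $0_n$ precisely when $p(\alpha) = 0$, i.e. precisely when $m_\alpha \mid p$. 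Hence the minimal polynomial of $M_{\alpha,B}$ equals $m_\alpha$.

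Next I would use that $m_\alpha$ is irreducible over $\mathds{F}_2$, being the minimal polynomial of an algebraic element. The characteristic polynomial $\chi_{M_{\alpha,B}}$ is a monic multiple of the minimal polynomial sharing the same irreducible factors; since the only available irreducible factor is $m_\alpha$, we must have $\chi_{M_{\alpha,B}} = (m_\alpha)^e$ for some integer $e \geq 1$. Comparing degrees, $\deg \chi_{M_{\alpha,B}} = n$ while $\deg m_\alpha = m$, so $em = n$ and therefore $e = n/m = d$, which is the claim.

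The delicate point, and the one I would spell out most carefully, is the step asserting that an irreducible minimal polynomial forces the characteristic polynomial to be a pure power of it. If one prefers a fully self-contained argument that avoids quoting the ``same irreducible factors'' theorem, the cleanest route is constructive: regard $\mathds{F}_{2^n}$ as a vector space of dimension $d = n/m$ over $\mathds{F}_2(\alpha) = \mathds{F}_{2^m}$, fix an $\mathds{F}_{2^m}$-basis $\{e_1, \dots, e_d\}$, and take the $\mathds{F}_2$-basis $B' = \{\alpha^i e_j : 0 \le i \le m-1,\ 1 \le j \le d\}$, ordered so that the powers $\alpha^i e_j$ for each fixed $j$ are consecutive. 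For each $j$ the span of $\{e_j, \alpha e_j, \dots, \alpha^{m-1} e_j\}$ is invariant under multiplication by $\alpha$ (since $\alpha^m e_j$ re-expresses via $m_\alpha$), and on it $\alpha$ acts exactly as the companion matrix $C_{m_\alpha}$; thus $M_{\alpha,B'} = \bigoplus_{j=1}^d C_{m_\alpha}$ and $\chi_{M_{\alpha,B'}} = \left(\chi_{C_{m_\alpha}}\right)^d = (m_\alpha)^d$, using the recorded fact $\chi_{C_{m_\alpha}} = m_\alpha$. Finally, since $M_{\alpha,B}$ and $M_{\alpha,B'}$ are similar for any basis $B$ (via the basis-transformation matrix) and the characteristic polynomial is a similarity invariant, $\chi_{M_{\alpha,B}} = (m_\alpha)^d$ holds for every $B$, completing the argument.
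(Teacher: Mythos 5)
Your proposal is correct, and it in fact contains two arguments. The second, ``self-contained'' one is essentially the paper's own proof: the paper likewise observes that $\chi_{M_{\alpha,B}}$ is a similarity invariant, hence independent of the basis, and then computes it in the basis formed by the products $\beta^{j}\alpha^{i}$ of a polynomial basis of $\mathds{F}_{2^m}$ over $\mathds{F}_2$ with a polynomial basis of $\mathds{F}_{2^n}$ over $\mathds{F}_{2^m}$ (the special case $e_j=\beta^{j-1}$ of your construction), in which the matrix becomes $\bigoplus_{k=1}^{d}C_{m_\alpha}$, giving $\chi_{M_{\alpha,B}}=\left(\chi_{C_{m_\alpha}}\right)^{d}=(m_\alpha)^{d}$. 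Your primary argument is genuinely different: it pins down the minimal polynomial of $M_{\alpha,B}$ as $m_\alpha$ via the injective $\mathds{F}_2$-algebra homomorphism $\beta\mapsto M_{\beta,B}$, and then invokes the standard theorem that the characteristic and minimal polynomials of a matrix share the same irreducible factors, so irreducibility of $m_\alpha$ forces $\chi_{M_{\alpha,B}}=(m_\alpha)^{e}$, and a degree count gives $e=n/m=d$. That route is shorter and basis-free, but it leans on a nontrivial structure theorem (which you rightly flag as the delicate point); the paper's route --- and your fallback --- is elementary and constructive, needing only the recorded fact $\chi_{C_q}=m_{C_q}=q$ for companion matrices together with the multiplicativity of the characteristic polynomial over block-diagonal matrices, and it additionally exhibits the concrete block-diagonal normal form of $M_{\alpha,B}$. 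Either version is a complete and valid proof of the proposition.
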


\begin{proof}
Since $\chi_{M_{\alpha,B}}$ is independent of the choice of the
basis, it is sufficient that we consider about a specified basis. As
a such basis, we can construct as follows: let's take the polynomial
basis $\{1,\alpha,\alpha^2,\dots,\alpha^m\}$ of $\mathds{F}_{2^m}$
over $\mathds{F}_2$ and the polynomial basis
$\{1,\beta,\beta^2,\dots,\beta^d\}$ of $\mathds{F}_{2^n}$ over
$\mathds{F}_{2^m}$. Now, $B=\{1,\alpha,\dots,\alpha^m; \ \beta,
\beta\alpha,\dots,$ $\beta\alpha^m; \ \dots \ ; \
\beta^d,\beta^d\alpha,\dots, \beta^d\alpha^m\}$ becomes a basis of
$\mathds{F}_{2^n}$ over $\mathds{F}_2$ and, with respect to this
basis, we have $M_{\alpha,B}=\bigoplus_{k=1}^d C_{m_\alpha}$. Thus
$\chi_{M_{\alpha,B}} = (\chi_{C_{m_\alpha}})^d = (m_\alpha)^d$.
\qed\end{proof}

\begin{proposition}\label{pro.me11}
Let $f \in \mathds{F}_2[x]$ be a irreducible polynomial and suppose
that wt$(f) \geq 5$. Then wt$(f^d) \geq 5$ for any integer $d \geq
1$.
\end{proposition}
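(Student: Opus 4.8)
The plan is to combine two reductions with a repeated-root (derivative) argument. First I would record that $\mathrm{wt}(f^d)$ is always \emph{odd}: since $f$ is irreducible of degree $\geq 4$ it has $f(0)=f(1)=1$, so $f^d(1)=f(1)^d=1$ and hence $\mathrm{wt}(f^d)\equiv 1 \pmod 2$. Thus it suffices to exclude $\mathrm{wt}(f^d)\in\{1,3\}$. Second I would reduce to the case of \emph{odd} exponent. Writing the exponent as $2^s t$ with $t$ odd and using that over $\mathds{F}_2$ one has $g(x)^2=g(x^2)$, squaring merely doubles every exponent and so preserves the number of terms; iterating gives $\mathrm{wt}(f^{2^s t})=\mathrm{wt}\bigl((f^{t})^{2^s}\bigr)=\mathrm{wt}(f^{t})$. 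So I may assume the exponent $d$ is odd. If $d=1$ the hypothesis $\mathrm{wt}(f)\geq 5$ already settles it, leaving the genuine case $d\geq 3$ odd.

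The weight-one case is immediate: a polynomial of weight $1$ with nonzero constant term (here $f^d(0)=f(0)^d=1$) must be the constant $1$, impossible since $\deg f^d\geq 4$. The heart of the matter is to show that $f^d$ cannot be a trinomial when $d\geq 3$ is odd. Suppose for contradiction that $f^d=x^a+x^b+1$ with $a>b\geq 1$, the constant term being $1$ again because $f(0)=1$. Since $d\geq 2$, every root $\zeta$ of $f$ has multiplicity $\geq d\geq 2$ in $f^d$, so $\zeta$ is a common root of $f^d$ and its formal derivative $(f^d)'=a\,x^{a-1}+b\,x^{b-1}$; moreover $\zeta\neq 0$.

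I would then split according to the parities of $a$ and $b$, reading the derivative coefficients modulo $2$. If exactly one of $a,b$ is odd, then $(f^d)'$ is a nonzero monomial, forcing $\zeta=0$, a contradiction. If both are even, then $(f^d)'=0$, i.e. $f^d$ is a perfect square; but $f^d$ contains the irreducible factor $f$ with odd multiplicity $d$, so it cannot be a square, a contradiction. The remaining case is $a,b$ both odd: there $(f^d)'(\zeta)=\zeta^{a-1}+\zeta^{b-1}=0$, and dividing by $\zeta^{b-1}\neq 0$ yields $\zeta^{a-b}=1$, whence $\zeta^{a}=\zeta^{b}$. Substituting into $f^d(\zeta)=0$ gives $\zeta^{b}+\zeta^{b}+1=1=0$, a contradiction. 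No parity pattern survives, so $f^d$ is not a trinomial, and together with the oddness of the weight and the exclusion of weight $1$ this gives $\mathrm{wt}(f^d)\geq 5$.

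The step I expect to be the main obstacle is precisely the both-odd case: unlike the other two, it cannot be dispatched by a one-line degree or squaring remark, and it is where the trinomial structure is used essentially, through the cancellation $\zeta^{a-b}=1\Rightarrow\zeta^{a}=\zeta^{b}$. Everything else—the reduction to odd exponent, the oddness of the weight, and the repeated-root mechanism—is routine once this key identity is isolated.
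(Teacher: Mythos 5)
Your proof is correct, but its core mechanism is genuinely different from the paper's. The two arguments share the outer reductions: the passage to an odd exponent via the Frobenius identity $g(x)^2=g(x^2)$ (so wt$(f^{2^st})=$ wt$(f^t)$), and the parity observation that $f(1)=1$ forces wt$(f^d)$ to be odd. Where you diverge is the main step for odd $d\geq 3$: the paper expands $f^d=(x^{n_1}+x^{n_2}+\cdots+x^{n_k}+1)^d$ multinomially and exhibits four surviving monomials, namely $x^{dn_1}$, $d\,x^{(d-1)n_1+n_2}$, $d\,x^{n_k}$ and $1$, whose coefficients are odd because $d$ is odd, concluding wt$(f^d)\geq 4$ and then $\geq 5$ by parity; you instead exclude the odd weights $1$ and $3$ directly, the trinomial case via the repeated-root/derivative criterion and the three-way parity split on the exponents $a,b$. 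Each approach buys something. The paper's expansion is shorter, but it silently relies on a no-cancellation claim --- that the exponents $(d-1)n_1+n_2$ and $n_k$ are each attained by a unique choice of factors, which requires a small ``deficiency'' argument ($(d-1)n_1+n_2$ is the largest exponent below $dn_1$, and $n_k$ the smallest above $0$) that the paper does not spell out. Your case analysis needs no such bookkeeping, at the price of invoking unique factorization (to see that $f^d$ with $d$ odd cannot be a square) and the fact that a multiple root is a root of the formal derivative. Your route also isolates a slightly stronger statement: for odd $d\geq 3$, $f^d$ is never a trinomial for \emph{any} irreducible $f$ of degree $\geq 2$; the hypothesis wt$(f)\geq 5$ enters your argument only through the reduced base case $d=1$.
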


\begin{proof}
Let $f(x)=x^{n_1}+x^{n_2}+\cdots+x^{n_k}+1$, where $n_1
>n_2>\cdots>n_k\geq1$ and $k\geq 4$. Since wt$(f^2)=$ wt$(f)$, we can
assume that $d$ is odd. Then $f(x)^d=(x^{n_1} + x^{n_2} + \cdots +
x^{n_k}+1)^d = x^{dn_1} + dx^{(d-1)n_1+n_2} + \cdots + dx^{n_k} +
1$. Since $d$ is odd, wt$(f^d)\geq 4$. Considering that $f$ is
irreducible, $f(1)=1$ and so $f(1)^d=1$. Thus wt$(f^d)$ is a odd and
we obtain wt$(f^d)\geq 5$. \qed
\end{proof}

\begin{proposition}\label{pro.me12}
(i) The matrix $I+E_{i,j}$ with $i\neq j$ is a involution, i.e.
$(I+E_{i,j})^{-1}=I+E_{i,j}$, and \emph{det}$(I+E_{i,j})=1$.

\noindent(ii) For any matrix $A\in\,$\emph{Mat}$_n(\mathds{F}_2)$,
\emph{det}$(A+E_{i,j})=$
\emph{det}$(A)\,+\,$\emph{det}$(A^{(i,j)})$.

\noindent(iii) For any integer $i,j\in \{1,2,\ldots,n\}$,

\begin{equation}\label{eq.me1}
\emph{det}\left((C_{x^n+1}+\lambda I)^{(i,j)}\right)=\left\{
\begin{array}{ll}
\lambda^{i-j-1} & \quad (i>j)\\
\lambda^{n+i-j-1} & \quad (i\leq j)
\end{array}
\right..
\end{equation}
Furthermore, the determinant of a matrix $(C_{x^n+1}+\lambda
I)^{(i_1,j_1)(i_2,j_2)}$ is zero or $\lambda^{k}$ for some integer
$k\geq 0$.
\end{proposition}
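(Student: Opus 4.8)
The first two parts are routine and I would dispatch them quickly. For (i), since $i\neq j$ we have $E_{i,j}E_{i,j}=0_n$, so $(I+E_{i,j})^2 = I + 2E_{i,j} + E_{i,j}^2 = I$ over $\mathds{F}_2$, giving the involution; and $I+E_{i,j}$ is the elementary matrix of a single row addition, hence triangular with unit diagonal, so $\det(I+E_{i,j})=1$. For (ii), I would view the determinant as multilinear in the $j$-th column: the matrix $A+E_{i,j}$ agrees with $A$ except that its $j$-th column is that of $A$ plus the $i$-th standard vector $e_i$. Splitting along this column gives $\det(A+E_{i,j}) = \det(A) + \det(A_j)$, where $A_j$ is $A$ with its $j$-th column replaced by $e_i$; expanding $\det(A_j)$ along that column yields the cofactor $\pm\det(A^{(i,j)})$, and over $\mathds{F}_2$ the sign is irrelevant, so $\det(A+E_{i,j}) = \det(A)+\det(A^{(i,j)})$.

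The substance is in (iii). Writing $M = C_{x^n+1}+\lambda I$, note that $M$ has at most two nonzero entries in each row and column: the diagonal entries equal $\lambda$, the subdiagonal entries $(k+1,k)$ equal $1$, and the single corner entry $(1,n)$ equals $1$. My plan is to exploit this sparsity through repeated cofactor expansion. After deleting column $j$, the entry that column $j$ contributed to row $j+1$ is gone, so row $j+1$ of $M^{(i,j)}$ retains only its diagonal $\lambda$ (in column $j+1$); expanding the determinant along this row extracts one factor $\lambda$ and removes column $j+1$, after which row $j+2$ is in the same situation. Iterating this forces a chain of diagonal $\lambda$-factors for the columns lying strictly between $j$ and $i$ in the increasing cyclic order, terminating precisely when the chain reaches the deleted row $i$; the remaining entries are all subdiagonal/corner $1$'s and contribute no further powers of $\lambda$. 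Counting the length of this chain gives the exponent: $i-j-1$ when $i>j$ and $n+i-j-1$ when $i\le j$, which is the claimed formula.

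Equivalently, and this is the cleanest way to see both the formula and the final ``furthermore'' clause, I would read $\det(M^{(i,j)})$ as a sum over perfect matchings of the bipartite graph $G$ in which column $k$ is joined to rows $k$ and $k+1$ (indices cyclic), a diagonal edge carrying weight $\lambda$ and a shift edge weight $1$. This $G$ is a single cycle on $2n$ vertices, so it has exactly two perfect matchings (all-diagonal and all-shift), recovering $\det M = \lambda^n + 1$. Deleting a row and a column removes two vertices, breaking the cycle into paths; since a path admits at most one perfect matching, $M^{(i,j)}$ has a unique matching and hence $\det(M^{(i,j)})$ is the single monomial $\lambda^{(\#\text{diagonal edges})}$, whose exponent I again obtain by counting. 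For the double deletion $(C_{x^n+1}+\lambda I)^{(i_1,j_1)(i_2,j_2)}$, the same picture removes four vertices from the $2n$-cycle, leaving a disjoint union of paths; such a graph still has at most one perfect matching, so the determinant is either $0$ (when some resulting path has odd order, so no matching exists) or a single monomial $\lambda^{k}$.

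I expect the main obstacle to be purely bookkeeping: pinning down the exponent uniformly across the cases $i>j$, $i\le j$, and the wrap-around at the corner entry $(1,n)$, together with the parity argument guaranteeing that the deleted-vertex graph is always a union of paths with at most one perfect matching. The determinant identities themselves are forced once the sparsity/matching structure is in place; no cancellation can occur, since over $\mathds{F}_2$ every surviving term is a nonnegative power of $\lambda$ with coefficient $1$.
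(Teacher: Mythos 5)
Your proposal is correct, but for part (iii) it takes a genuinely different route from the paper. The paper argues by explicit block decomposition: it introduces the bidiagonal matrices $H_m^\lambda$ (diagonal $\lambda$, subdiagonal $1$) and $S_m^\lambda$ (diagonal $1$, superdiagonal $\lambda$), observes that $(C_{x^n+1}+\lambda I)^{(i,j)}$ is block triangular with diagonal blocks of these two types (after moving the first row to the bottom in the case $i>j$), and reads off the determinant as the product $\lambda^{i-1}\cdot 1\cdot\lambda^{n-j}$ or $1\cdot\lambda^{i-j-1}\cdot 1$; the ``furthermore'' clause is then handled by a case analysis on where the second deleted entry $(k,l)$ sits relative to the block structure, using part (ii) in the form $\det U^{(k,l)}=\det(U+E_{k,l})+\det U$ to kill the off-block cases. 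Your argument instead reads $\det = \mathrm{per}$ over $\mathds{F}_2$ as a sum over perfect matchings of the bipartite row--column graph, notes that this graph for $C_{x^n+1}+\lambda I$ is a single $2n$-cycle, and that deleting row/column vertices breaks the cycle into paths, each of which carries at most one perfect matching. This buys you three things the paper's proof does not have: a uniform treatment of the single- and double-deletion cases (no case analysis at all), a transparent reason why no cancellation can occur, and an immediate generalization to any number of deletions; conversely, the paper's computation is fully explicit and self-contained linear algebra, whereas yours requires correctly setting up the determinant-as-matchings dictionary and the parity bookkeeping you flag. One small caution on your first (cofactor-chain) sketch: after the chain of $\lambda$-expansions terminates at the deleted row $i$, the residual matrix is not made only of $1$'s --- it still contains $\lambda$ entries --- but it is triangular with unit diagonal (up to a cyclic row shift), so its determinant is $1$; your matching argument sidesteps this imprecision entirely and is the version I would keep.
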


\begin{proof}
(i) and (ii) of the proposition are trivial from the fundamental
properties of matrix theory.

\noindent (iii) We define the $m\times m$ matrices $H_m^\lambda$ and
 $S_m^\lambda$ as the following:

\begin{equation}\nonumber
H_m^\lambda := \left(
\begin{array}{ccccc}
\ \lambda &&&&\\
\ 1 & \ \lambda &&&\\
& \ \ddots & \ \ddots &&\\
&& \ 1 & \ \lambda &\\
&&& \ 1 & \ \lambda
\end{array}
\right), \quad S_m^\lambda := \left(
\begin{array}{ccccc}
1 & \ \lambda &&&\\
  & \ 1 & \ \lambda &&\\
  &   & \ \ddots & \ \ddots &\\
  &&& \ 1 & \ \lambda\\
  &&&& \ 1\\
\end{array}
\right).
\end{equation}

If $i\leq j$ then the matrix $(C_{x^n+1}+\lambda I)^{(i,j)}$ has the
following form:

\begin{equation}\nonumber
U = \left(
\begin{array}{ccr}
H_{i-1}^\lambda && 1\\
& S_{j-i}^\lambda & \\
&& H_{n-j}^\lambda\\
\end{array}
\right)
\end{equation}
and det$(C_{x^n+1}+\lambda
I)^{(i,j)}=\textrm{det}(H_{i-1}^\lambda)\cdot
\textrm{det}(S_{j-i}^\lambda)\cdot \textrm{det}(H_{n-j}^\lambda)=
\lambda^{n-j+i-1}$.

If $i > j$ then, by moving the first row to the last row in the
matrix $(C_{x^n+1}+\lambda I)^{(i,j)}$ without the change of the
determinant, we get the following matrix:

\begin{equation}\nonumber
V=\left(
\begin{array}{lcc}
S_{j-1}^\lambda &&\\
& H_{i-j-1}^\lambda & \\
\lambda && S_{n-i+1}^\lambda
\end{array}
\right).
\end{equation}
Thus, det$(C_{x^n+1}+\lambda I)^{(i,j)} = \textrm{det}\,V =
\lambda^{i-j-1}$.\\

Now we consider det$\,U^{(k,l)}$ and det$\,V^{(k,l)}$ in order to
evaluate the determinant of the matrix $(C_{x^n+1}+\lambda
I)^{(i,j)(k,l)}$.

If the $(k,l)$-entry of $U$ belongs to a diagonal block matrix
$H^{\lambda}$ or $S^{\lambda}$ of the matrix, then det$\,U^{(k,l)}$
is $\lambda^d$ for some integer $d\geq 0$. If the entry belongs to
an upper part of diagonal blocks of $U$, then det$\,U$ =
det$\,(U+E_{k,l})$ and thus
$\textrm{det}\,U^{(k,l)}=\textrm{det}\,(U+E_{k.l})-\textrm{det}\,U=0$.
In the case that the $(k,l)$-entry of $U$ belongs to a lower part of
diagonal blocks, if the $k$-th row and $l$-th column of $U$ cross to
the blocks $H_{n-j}^\lambda$ and $H_{i-1}^\lambda$ respectively,
then det $U^{(k,l)}=\lambda^d$ for some integer $d\geq 0$ because
the matrix obtained by moving the first row to the last row in
$U^{(k,l)}$ becomes a block diagonal matrix with $H^\lambda$ or
$S^\lambda$ form as diagonal blocks. Otherwise, since
\begin{equation}\nonumber
\textrm{det}\,U^{(k,l)}=\textrm{det}\left(\begin{array}{cc}
H_{i-1}^\lambda & \\
& S_{j-i}^\lambda
\end{array}\right)^{(k,l)}\cdot \textrm{det}\,H_{n-j}^\lambda
\end{equation}
or
\begin{equation}\nonumber
\textrm{det}\,U^{(k,l)}=\textrm{det}\,H_{i-1}^\lambda \cdot
\textrm{det}\left(\begin{array}{cc}
S_{j-i}^\lambda & \\
& H_{n-j}^\lambda
\end{array}\right)^{(k',l')},
\end{equation}
and
\begin{equation}\nonumber
\textrm{det}\left(\begin{array}{cc}
H_{i-1}^\lambda & \\
& S_{j-i}^\lambda
\end{array}\right)^{(k,l)} = \textrm{det}\left(\begin{array}{cc}
S_{j-i}^\lambda & \\
& H_{n-j}^\lambda
\end{array}\right)^{(k',l')}=0,
\end{equation}
we obtain $\textrm{det}\,U^{(k,l)}=0$.

In the similar way, we can also obtain the same result for
$\textrm{det}\,V^{(k,l)}$. Therefore $\textrm{det}(C_{x^n+1}+\lambda
I)^{(i,j)(k,l)}$ is zero or $\lambda^{d}$ for some integer $d\geq
0$.
\qed\end{proof}

\begin{corollary}\label{cor.me13}
The characteristic polynomial of the matrix $C_{x^n+1}(I+E_{i,j})$
is a trinomial. Precisely,
\begin{equation}\label{eq.me2}
\emph{det}\Big(C_{x^n+1}(I+E_{i,j})+\lambda I\Big)=\left\{
\begin{array}{ll}
\lambda^n+\lambda^{i-j}+1 & \quad (i>j)\\
\lambda^n+\lambda^{n+i-j}+1 & \quad (i\leq j)
\end{array}
\right..
\end{equation}
\end{corollary}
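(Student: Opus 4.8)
The plan is to compute the characteristic polynomial $\det\big(C_{x^n+1}(I+E_{i,j})+\lambda I\big)$ directly by expanding it in terms of the submatrix determinants already catalogued in Proposition~\ref{pro.me12}(iii). First I would use the multiplicative structure $C_{x^n+1}(I+E_{i,j}) = C_{x^n+1} + C_{x^n+1}E_{i,j}$. Since right-multiplication by $E_{i,j}$ copies the $i$-th column of $C_{x^n+1}$ into the $j$-th column (and zeros out all other columns), the product $C_{x^n+1}E_{i,j}$ equals $E_{\sigma(i),j}$ where $\sigma$ is the cyclic shift induced by $C_{x^n+1}$; concretely $C_{x^n+1}$ has a single $1$ in each column at a fixed offset, so $C_{x^n+1}(I+E_{i,j}) = C_{x^n+1} + E_{i',j}$ for the appropriate row index $i'$. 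Thus the matrix whose determinant we want differs from $C_{x^n+1}+\lambda I$ in exactly one off-diagonal entry.

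Next I would apply Proposition~\ref{pro.me12}(ii), the expansion $\det(A+E_{k,l}) = \det(A) + \det(A^{(k,l)})$, with $A = C_{x^n+1}+\lambda I$ and $E_{k,l}$ the single added entry identified above. This immediately gives
\begin{equation}\nonumber
\det\big(C_{x^n+1}(I+E_{i,j})+\lambda I\big) = \det\big(C_{x^n+1}+\lambda I\big) + \det\big((C_{x^n+1}+\lambda I)^{(k,l)}\big).
\end{equation}
The first summand is the characteristic polynomial of the companion matrix of $x^n+1$, namely $\lambda^n+1$ (working over $\mathds{F}_2$). The second summand is precisely one of the minor determinants evaluated in formula~\eqref{eq.me1}, so it equals $\lambda^{i-j-1}$ or $\lambda^{n+i-j-1}$ type monomial; tracking the exact indices $(k,l)$ coming from the shift $\sigma$ should convert these into the exponents $\lambda^{i-j}$ and $\lambda^{n+i-j}$ claimed in~\eqref{eq.me2}. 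Summing a degree-$n$ term, a single middle power of $\lambda$, and the constant $1$ yields a trinomial, which is exactly the assertion.

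The main obstacle I anticipate is purely bookkeeping: correctly pinning down the row index $i'$ (equivalently the pair $(k,l)$) produced by the cyclic shift of $C_{x^n+1}$ and then matching it against the two-case formula in~\eqref{eq.me1} so that the exponents come out as $i-j$ versus $n+i-j$ rather than off by one. I would handle this by splitting into the cases $i>j$ and $i\le j$ from the outset, carefully aligning the index conventions of Proposition~\ref{pro.me12}(iii) (which are stated for $(C_{x^n+1}+\lambda I)^{(i,j)}$) with the shifted indices, and verifying the boundary behaviour when the shift wraps around modulo $n$. Once the single minor is identified, no genuine computation remains—the result is a direct corollary of parts (ii) and (iii) of the preceding proposition.
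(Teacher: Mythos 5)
Your proposal is correct, but your opening move differs from the paper's. You expand $C_{x^n+1}(I+E_{i,j}) = C_{x^n+1}+E_{\sigma(i),j}$ via the cyclic shift $\sigma$, then apply Proposition~\ref{pro.me12}(ii) and (iii) to the minor at the shifted position $(\sigma(i),j)$; this forces exactly the index bookkeeping you anticipate, including the wrap-around case $i=n$ (where $\sigma(i)=1\leq j$, so the minor is $\lambda^{n+1-j-1}=\lambda^{n-j}=\lambda^{i-j}$, consistent with the $i>j$ branch of \eqref{eq.me2}), and it does all work out. The paper instead uses the involution property of Proposition~\ref{pro.me12}(i): since $\det(I+E_{i,j})=1$ and $(I+E_{i,j})^2=I$, one has $\det\big(C_{x^n+1}(I+E_{i,j})+\lambda I\big)=\det\big(C_{x^n+1}+\lambda(I+E_{i,j})\big)=\det\big((C_{x^n+1}+\lambda I)+\lambda E_{i,j}\big)$, so the perturbation sits at the \emph{original} position $(i,j)$ but carries a coefficient $\lambda$, and the exponents $i-j$ and $n+i-j$ fall out of \eqref{eq.me1} with no case analysis on wrap-around. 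Each route has a small advantage: the paper's avoids your shift bookkeeping entirely, while yours applies Proposition~\ref{pro.me12}(ii) literally (perturbation with coefficient $1$) instead of implicitly invoking its $\lambda$-scaled variant $\det(A+\lambda E_{i,j})=\det(A)+\lambda\det\big(A^{(i,j)}\big)$, which the paper's version tacitly needs. Moreover, your decomposition $C_{x^n+1}E_{i,j}=E_{\sigma(i),j}$ is precisely the device the paper itself uses later in the proof of Theorem~\ref{theo.me9}, where the involution trick can absorb only one of the two factors $(I+E_{i_k,j_k})$, so your argument aligns more closely with how the corollary is actually deployed.
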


\begin{proof}
\begin{equation}\nonumber\begin{aligned}
\chi_M&=\textrm{det}(C_{x^n+1}(I+E_{i,j})+\lambda I)\\
&=\textrm{det}(C_{x^n+1}+\lambda (I+E_{i,j})) \quad \textrm{(by Proposition \ref{pro.me12}, (i))}\\
&=\textrm{det}((C_{x^n+1}+\lambda I)+\lambda E_{i,j}) \\
&=\textrm{det}(C_{x^n+1}+\lambda I)+\lambda\textrm{det}\left((C_{x^n+1}+\lambda I)^{(i,j)}\right) \quad \textrm{(by Proposition \ref{pro.me12}, (ii))}\\
&=\left\{
\begin{array}{ll}
\lambda^n+1+\lambda^{i-j} & \quad(i>j)\\
\lambda^n+1+\lambda^{n+i-j} & \quad(i\leq j)\\
\end{array}
\right. \quad \textrm{(by Proposition \ref{pro.me12}, (iii))}.
\end{aligned}\end{equation}\qed
\end{proof}

Let $\alpha\in \mathds{F}_{2^n}^{*}$ be given. As mentioned in
Section 2, by using a cycle normal form, the matrix $M_{\alpha, B}$
representing the multiplication with $\alpha$ can be written as
follows:
\begin{equation}\label{eq.me3}
M_{\alpha,
B}=\left(\bigoplus_{k=1}^{s}C_{x^{m_k}+1}\right)\prod_{k=1}^{t}(I+E_{i_k,j_k}),
\end{equation}
where $\sum_{k=1}^{s}m_k=n$ and $m_1\geq\cdots\geq m_s\geq 1$, and
$i_k\neq j_k$ for all $k$.

\begin{lemma}\label{lem.me14}
Let $\alpha\in \mathds{F}_{2^n}^{*}$ and $\alpha\neq 1$. If
\emph{wt}$_\oplus(M_{\alpha,B})=2$, then the number of blocks in the
cycle normal form is less or equal to 2, i.e. $s\leq 2$.
\end{lemma}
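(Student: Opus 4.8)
The plan is to reduce the whole statement to the single scalar condition $\chi_{M_{\alpha,B}}(1)\neq 0$ and then to show that this condition must fail whenever $s\geq 3$. Write $M:=M_{\alpha,B}=P\prod_{k=1}^{2}(I+E_{i_k,j_k})$ as in \eqref{eq.me3}, where $P=\bigoplus_{k=1}^{s}C_{x^{m_k}+1}$ is the block-diagonal cycle normal form and $i_k\neq j_k$. The one external fact I would record first is that $M+I$ is the matrix (in the basis $B$) of multiplication by $\alpha+1$; since $\alpha\neq 1$ forces $\alpha+1\neq 0$, this field element is invertible, so $M+I$ is invertible and hence $\chi_M(1)=\det(M+I)\neq 0$. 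It therefore suffices to prove the implication $s\geq 3\ \Rightarrow\ \det(M+I)=\chi_M(1)=0$.

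To compute $\chi_M$ I would mimic the factorisation used in Corollary \ref{cor.me13}. Using that each $I+E_{i_k,j_k}$ is an involution (Proposition \ref{pro.me12}(i)) gives $\chi_M=\det\!\big(M+\lambda I\big)=\det\!\big(P+\lambda(I+E_{i_2,j_2})(I+E_{i_1,j_1})\big)$, that is $\chi_M=\det\!\big(N+\lambda E_{i_1,j_1}+\lambda E_{i_2,j_2}+\lambda E_{i_2,j_1}\big)$ with $N:=P+\lambda I=\bigoplus_{k=1}^{s}\big(C_{x^{m_k}+1}+\lambda I\big)$, the last perturbation being present only when $j_2=i_1$. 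Expanding by repeated use of Proposition \ref{pro.me12}(ii) (with $\lambda E_{i,j}$ in place of $E_{i,j}$) then yields
\begin{equation}\nonumber
\chi_M=\det N+\lambda\,\det N^{(i_1,j_1)}+\lambda\,\det N^{(i_2,j_2)}+[\,j_2=i_1\,]\,\lambda\,\det N^{(i_2,j_1)}+\lambda^2\det N^{(i_1,j_1),(i_2,j_2)},
\end{equation}
where the bracket is $1$ exactly when $j_2=i_1$. Because $N$ is block diagonal, every minor here factors as a product of the untouched diagonal blocks $C_{x^{m_k}+1}+\lambda I$, each contributing a factor $\lambda^{m_k}+1$, times one local minor which by Proposition \ref{pro.me12}(iii) is either $0$ or a pure power of $\lambda$; moreover the Leibniz expansion restricted to the block structure shows that such a minor vanishes identically unless, in every block, the number of deleted rows equals the number of deleted columns.

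Now I would evaluate at $\lambda=1$ assuming $s\geq 2$. Then $\det N(1)=\prod_{k=1}^{s}(1+1)=0$, while each single-deletion minor still retains at least one factor $\lambda^{m_k}+1$ (a single deleted row and column disturb at most one of the $\geq 2$ blocks), so it too vanishes at $\lambda=1$; the same holds for the cross term. Hence $\chi_M(1)$ collapses to the value at $\lambda=1$ of the lone double-deletion term $\lambda^2\det N^{(i_1,j_1),(i_2,j_2)}$. By the balance criterion above, this minor is nonzero at $\lambda=1$ only if (a) in every block the deleted rows and columns are balanced and (b) no block escapes deletion (else a surviving factor $\lambda^{m_k}+1$ kills it at $\lambda=1$). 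The combinatorial crux is that balance forces the multiset $\{\mathrm{block}(i_1),\mathrm{block}(i_2)\}$ to equal $\{\mathrm{block}(j_1),\mathrm{block}(j_2)\}$, so the four indices $i_1,i_2,j_1,j_2$ meet at most two distinct blocks. Thus as soon as $s\geq 3$, condition (b) fails, the double-deletion term vanishes at $\lambda=1$, and therefore $\chi_M(1)=0$, contradicting $\chi_M(1)\neq 0$. This gives $s\leq 2$.

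\textbf{Main obstacle.} The delicate part is not the block-diagonal reduction but the careful handling of coincidences among $i_1,i_2,j_1,j_2$. I would need to verify that repeated rows or columns (for instance $i_1=i_2$, or the extra cross term arising when $j_2=i_1$) do not spawn further terms surviving at $\lambda=1$: whenever two columns of a candidate minor coincide the determinant is zero, and it is exactly this cancellation that prunes the expansion down to the single double-deletion term written above. Establishing the block-balance dichotomy rigorously — that a minor of a block-diagonal matrix vanishes unless the deleted rows and columns match block-by-block — through the permutation expansion is the technical heart, and the pigeonhole step ``balanced $\Rightarrow$ at most two blocks touched'' is precisely what pins the bound at $s\leq 2$ rather than the weaker $s\leq 4$ that the divisibility argument alone would give.
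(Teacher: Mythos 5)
Your proof is correct, but it follows a genuinely different route from the paper's. The paper's own argument is purely structural and takes four lines: since $M_{\alpha,B}=P(I+E_{i_1,j_1})(I+E_{i_2,j_2})$ differs from $P=\bigoplus_{k=1}^{s}C_{x^{m_k}+1}$ only in columns $j_1$ and $j_2$, if $s\geq 3$ some block has none of its columns modified; those columns are supported inside that block, so the corresponding coordinate subspace is invariant and $M_{\alpha,B}$ acts on it as the cycle, whence $x^{m_k}+1$, and so $x+1$, divides $\chi_{M_{\alpha,B}}=(m_\alpha)^d$ (Proposition \ref{pro.me10}); irreducibility of $m_\alpha$ then forces $m_\alpha=x+1$, contradicting $\alpha\neq 1$. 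You reach the same underlying contradiction --- that $x+1$ cannot divide $\chi_{M_{\alpha,B}}$ --- but you obtain both halves differently: the impossibility comes directly from invertibility of $M+I$ as the matrix of multiplication by $\alpha+1\neq 0$, which neatly bypasses Proposition \ref{pro.me10} and the irreducibility step, while the vanishing $\chi_{M_{\alpha,B}}(1)=0$ for $s\geq 3$ comes from a full minor expansion of $\det\bigl(N+\lambda E_{i_1,j_1}+\lambda E_{i_2,j_2}+[\,j_2=i_1\,]\lambda E_{i_2,j_1}\bigr)$ together with the block-balance/pigeonhole criterion, instead of the one-line invariant-subspace observation. Your expansion formula and the balance criterion check out (including the degenerate cases, where coincident replaced columns kill the extra terms), so the argument is sound. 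What your route buys: it is more elementary on the field-theory side, and the expansion machinery you set up for block-diagonal $N=P+\lambda I$ is essentially what the paper deploys anyway in Case 2 of the proof of Theorem \ref{theo.me9}, so your lemma proof and the main proof share tools. What it costs: the coincident-index bookkeeping you correctly flag as the main obstacle ($j_1=j_2$, $i_1=i_2$, $j_2=i_1$) is real work that the paper's structural argument sidesteps entirely, and you only extract divisibility by $x+1$ where the paper gets the stronger divisibility by $x^{m_k}+1$ for free.
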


\begin{proof}
Since wt$_\oplus(M_{\alpha,B})=2$, we have $M_{\alpha,
B}=\left(\bigoplus_{k=1}^{s}C_{x^{m_k}+1}\right)
(I+E_{i_1,j_1})(I+E_{i_2,j_2})$. In general, the matrix
$A(I+E_{i,j})$ can be obtained by adding the $i$-th column of $A$ to
$j$-th column of $A$. Thus $M_{\alpha,B}$ can be obtained by
changing at most two columns, i.e. $j_1$-th column and $j_2$-th
column, in the matrix $\bigoplus_{k=1}^{s}C_{x^{m_k}+1}$. If we
suppose that $s\geq 3$, then neither $j_1$ nor $j_2$-th column
crosses at least one block $C_{x^{m_k}+1}$ in the matrix. Thus
$\chi_{M_{\alpha,B}}$ is divided by $\chi_{C_{x^{m_k}+1}}=x^{m_k}+1$
and so by $x+1$. Since $\chi_{M_{\alpha,B}}=(m_\alpha)^d$ for some
integer $d$ (from Proposition \ref{pro.me10}), we get $m_\alpha=x+1$
which contradicts to $\alpha\neq 1$. \qed
\end{proof}

\subsubsection{Proof of Theorem \ref{theo.me9}.}

Let $\mathds{F}_2(\alpha)=\mathds{F}_{2^m}\subset\mathds{F}_{2^n}$
and let the XOR-count of $\alpha$ be two, that is
wt$_\oplus(M_{\alpha, B})=2$ for some basis $B$ of
$\mathds{F}_{2^n}$ over $\mathds{F}_2$. From Lemma \ref{lem.me14},
$M_{\alpha,B}$ can be represented by either
\begin{equation}\label{eq.me4}
M_{\alpha,B}=C_{x^n+1}(I+E_{i_1,j_1})(I+E_{i_2,j_2})
\end{equation}
or
\begin{equation}\label{eq.me5}
M_{\alpha,B}=\left(C_{x^{m_1}+1}\bigoplus
C_{x^{m_2}+1}\right)(I+E_{i_1,j_1})(I+E_{i_2,j_2}),
\end{equation}
where $m_1+m_2=n$ and $m_1\geq m_2 \geq 1$.\\

\textbf{1) The Case that
$M_{\alpha,B}=C_{x^n+1}(I+E_{i_1,j_1})(I+E_{i_2,j_2})$}\\

In this case we first show that the Hamming weight of the
characteristic polynomial of $M_{\alpha, B}$ is less or equal to 5:
\begin{equation}\nonumber\begin{aligned}
\chi_{M_{\alpha,B}}&=\textrm{det}(C_{x^n+1}(I+E_{i_1,j_1})(I+E_{i_2,j_2})+\lambda I)\\
&=\textrm{det}(C_{x^n+1}(I+E_{i_1,j_1})+\lambda (I+E_{i_2,j_2}))\\
&=\textrm{det}(C_{x^n+1}(I+E_{i_1,j_1})+\lambda I+\lambda E_{i_2,j_2}))\\
&=\textrm{det}(C_{x^n+1}(I+E_{i_1,j_1})+\lambda I)+\lambda\, \textrm{det}\left(\big(C_{x^n+1}(I+E_{i_1,j_1})+\lambda I\big)^{(i_2,j_2)}\right)\\
&=\lambda^{n}+\lambda^{(n+)i_1-j_1}+1+\lambda\,
\textrm{det}\left(\big(C_{x^n+1}+\lambda
I+E_{\sigma(i_1),j_1}\big)^{(i_2,j_2)}\right),
\end{aligned}\end{equation}
where $\sigma$ is a permutation corresponding to $C_{x^n+1}$, which
is defined as:
\begin{equation}\nonumber
\sigma(i)=\left\{\begin{array}{ll}
i+1, & \quad i\in\{1,\dots,n-1\}\\
1, & \quad i=n
\end{array}\right..
\end{equation}

If either $\sigma(i_1) = i_2$ or $j_1=j_2$, then

\begin{equation}\nonumber\begin{aligned}
\phi(\lambda):&=\textrm{det}\left(\big(C_{x^n+1}+\lambda
I+E_{\sigma(i_1),j_1}\big)^{(i_2,j_2)}\right)\\
&= \textrm{det}\left(\big(C_{x^n+1}+\lambda
I\big)^{(i_2,j_2)}\right)=\lambda^{(n+)i_2-j_2}\\
\end{aligned}\end{equation}
and otherwise, for some indices $i'$ and $j'$ in $\{1,\dots,n-1\}$,
\begin{equation}\nonumber\begin{aligned}
\phi(\lambda)&=\textrm{det}\left(\big(C_{x^n+1}+\lambda
I\big)^{(i_2,j_2)}+E_{i',j'}\right)\\
&=\textrm{det}\left(\big(C_{x^n+1}+\lambda
I\big)^{(i_2,j_2)}\right)+\textrm{det}\left(\big(C_{x^n+1}+\lambda
I\big)^{(i_2,j_2)(i',j')}\right),
\end{aligned}\end{equation}
which is a binomial at most from Proposition \ref{pro.me12}, (iii).
Thus the Hamming weight of
$\chi_{M_{\alpha,B}}$ is smaller or equal to 5.\\

Next, we show that the extension degree of $\mathds{F}_{2^n}$ over
$\mathds{F}_{2^m}$, that is $d=n/m$, is at most two in the case
considered now.

Suppose that $d\geq 3$ in order to yield a contradiction. Then the
minimal polynomial of $\alpha$ with coefficients in $\mathds{F}_2$
\begin{equation}\label{eq.me6}
m_\alpha=x^{m}+c_{m-1}x^{m-1}+\cdots+c_1 x+1
\end{equation}
has a degree $m\leq\frac{n}{3}$ because
$\chi_{M_{\alpha,B}}=(m_\alpha)^d$.

Let $B=\{b_1,b_2,\dots,b_n\}$ be a basis which gives the form
\eqref{eq.me4}. Observing the each column of the matrix
$M_{\alpha,B}$ that represents the multiplication with $\alpha$,
non-zero in the $k (\neq j_1, j_2)$-th column is only $(k+1)$-th
entry. Thus we obtain the following a list of equalities with
respect to the basis. (We assume $j_1\leq j_2$ without loss of
generality.)

\begin{equation}\label{eq.me7}\begin{aligned}
\alpha b_1&=b_2\\
&\vdots\\
\alpha b_{j_1-1}&=b_{j_1}\\
\alpha b_{j_1+1}&=b_{j_1+2}\\
&\vdots\\
\alpha b_{j_2-1}&=b_{j_2}\\
\alpha b_{j_2+1}&=b_{j_2+2}\\
&\vdots\\
\alpha b_{n}&=b_{1}.\\
\end{aligned}\end{equation}
Now, set $\beta:=b_{j_1+1}$ and $\gamma:=b_{j_2+1}$, then from
\eqref{eq.me7}, we obtain
\begin{equation}\nonumber\begin{aligned}
(b_{j_1+1},b_{j_1+2},\dots,b_{j_2})&=(\beta, \alpha\beta,
\dots,\alpha^{j_2-j_1-1}\beta),\\
(b_{j_2+1},\dots,b_n,b_1,\dots,b_{j_1})&=(\gamma,
\alpha\gamma,\dots, \alpha^{n-(j_2-j_1)-1}\gamma).\\
\end{aligned}\end{equation}

If $j_2-j_1> \frac{n}{2}$, then by multiplying $\beta$ to both sides
of \eqref{eq.me6} and substituting $\alpha$, we obtain
\begin{equation}\nonumber
\alpha^m\beta+c_{m-1}\cdot\alpha^{m-1}\beta+\cdots+c_1\cdot\alpha\beta+\beta=0,
\end{equation}
which means that a sublist
$(\beta,\alpha\beta,\dots,\alpha^{m}\beta)$ of the list
$(b_{j_1+1},b_{j_1+2},\dots,b_{j_2})$ is linear dependent, this is
contradictory to linear independence of basis. If $j_2-j_1\leq
\frac{n}{2}$ then by multiplying $\gamma$ to both sides of
\eqref{eq.me6} we also get a contradiction that a sublist
$(\gamma,\alpha\gamma,\dots,\alpha^{m}\gamma)$ of the list
$(b_{j_2+1},\dots,b_n,b_1,\dots,b_{j_1})$ is linear dependent.\\

Therefore, $d\leq 2$ and we have $\chi_{M_{\alpha,B}}=m_\alpha$ or
$\chi_{M_{\alpha,B}}=(m_\alpha)^2$ from Proposition \ref{pro.me10}.
Thus we obtain
$\textrm{wt}(m_\alpha)=\textrm{wt}(\chi_{M_{\alpha,B}})\leq 5$.\\


\textbf{2) The case that $M_{\alpha,B}=\left(C_{x^{m_1}+1}\bigoplus
C_{x^{m_2}+1}\right)(I+E_{i_1,j_1})(I+E_{i_2,j_2})$}\\

In this case, following the proof of Lemma \ref{lem.me14}, each of
$j_1$ and $j_2$-th column of $M_{\alpha,B}$ has to cross the
different blocks in the matrix. Without loss of generality, let
$j_1$-th column cross the block $C_{x^{m_1}+1}$ and $j_2$-th one
cross the block $C_{x^{m_2}+1}$, that is, we can assume that $j_1\in
\{1,\dots, m_1\}$ and $j_2\in\{m_1+1,\dots,n\}$.

Setting $P:=C_{x^{m_1}+1}\bigoplus C_{x^{m_2}+1}$, we have
\begin{equation}\label{eq.me8}\begin{aligned}
\chi_{M_{\alpha,B}}&=\textrm{det}\,(P(I+E_{i_1,j_1})(I+E_{i_2,j_2})+\lambda
I)\\
&=\textrm{det}\,(P(I+E_{i_1,j_1})+\lambda(I+E_{i_2,j_2}))\\
&=\textrm{det}\,(P+\lambda I+E_{\sigma(i_1),j_1}+\lambda
E_{i_2,j_2}),
\end{aligned}\end{equation}
where $\sigma$ is a permutation corresponding to $P$ which is
defined as
\begin{equation}\label{eq.me9}
\sigma(i)=\left\{
\begin{array}{ll}
1 & \quad \textrm{if} \ i=m_1,\\
m_1+1 & \quad \textrm{if} \ i=n,\\
i+1 & \quad \textrm{otherwise.}
\end{array}
\right.
\end{equation}
If both $\sigma(i_1)$ and $i_2$ are in $\{1,\dots, m_1\}$ or both in
$\{m_1+1,\dots,n\}$, then from \eqref{eq.me8}
$\chi_{M_{\alpha,B}}=(m_\alpha)^d$ is divided by
$\chi_{C_{x^{m_2}+1}}=x^{m_2}+1$ or
$\chi_{C_{x^{m_1}+1}}=x^{m_1}+1$, so by $x+1$. Thus we get
$m_\alpha=x+1$, which is contradictory to $\alpha\neq 1$.\\

Now we consider two cases:\\

(i) If $\sigma(i_1)\in \{1,\dots, m_1\}$ and $i_2\in
\{m_1+1,\dots,n\}$, then we have
\begin{equation}\nonumber\begin{aligned}
&\chi_{M_{\alpha,B}}=(m_\alpha)^d=\\
&=\textrm{det}(C_{x^{m_1}+1}+\lambda I_{m_1} + E_{\sigma(i_1),j_1})
\,\cdot\, \textrm{det}(C_{x^{m_2}+1}+\lambda I_{m_2} + \lambda
E_{i_2-m_1,j_2-m_1}).
\end{aligned}\end{equation}

Thus, $\textrm{det}(C_{x^{m_1}+1}+\lambda I_{m_1} +
E_{\sigma(i_1),j_1})=(m_\alpha)^{d_1}$ for some integer $d_1$. From
Corollary \ref{cor.me13}, the left side is at most a trinomial and
by applying Proposition \ref{pro.me11}, we obtain
wt$(m_\alpha)<5$.\\

(ii) If $i_2\in \{1,\dots, m_1\}$ and $\sigma(i_1)\in
\{m_1+1,\dots,n\}$, then we have

\begin{equation}\nonumber\begin{aligned}
&\chi_{M_{\alpha,B}}=\textrm{det}(P+\lambda
I+E_{\sigma(i_1),j_1})+\lambda\,\textrm{det}\left((P+\lambda
I+E_{\sigma(i_1),j_1})^{(i_2,j_2)}\right)\\
&=\textrm{det}(C_{x^{m_1}+1}+\lambda I)\cdot
\textrm{det}(C_{x^{m_2}+1}+\lambda
I)+\\
&\hspace{2.5cm}+\lambda\,\textrm{det}\left((P+\lambda
I)^{(i_2,j_2)}\right)+\lambda\,\textrm{det}\left((P+\lambda
I)^{(i_2,j_2),(\sigma(i_1),j_1)}\right)\\
&=(\lambda^{m_1}+1)(\lambda^{m_2}+1)+\lambda\Big(\textrm{det}(P+\lambda
I+E_{i_2,j_2})-\textrm{det}(P+\lambda
I)\Big)+\\
&\hspace{6.3cm}+\lambda\,\textrm{det}\left((P+\lambda
I)^{(i_2,j_1),(\sigma(i_1),j_2)}\right)\\
&=(\lambda^{m_1}+1)(\lambda^{m_2}+1)+\\
&\hspace{1.6cm}+\lambda\,\textrm{det}\left((C_{x^{m_1}+1}+\lambda
I_{m_1})^{(i_2,j_1)}\right)
\textrm{det}\left((C_{x^{m_2}+1}+\lambda I_{m_2})^{(\sigma(i_1),j_2)}\right).\\
\end{aligned}\end{equation}
Since the last term is a monomial by Proposition \ref{pro.me12}
(iii), we obtain
\begin{equation}\label{eq.me10}
\textrm{wt}(\chi_{M_{\alpha,B}})\leq 5.
\end{equation}

On the other hand, if we suppose that $d\geq 3$ when
$\chi_{M_{\alpha,B}}=(m_\alpha)^d$, then the minimal polynomial of
$\alpha$ represented by \eqref{eq.me6} has a degree $m\leq
\frac{n}{3}$. Let $B=\{b_1,\dots,b_n\}$ be a basis generating the
matrix \eqref{eq.me5}. Then from the structure of the matrix
$M_{\alpha,B}$, we obtain a list of equalities:

\begin{equation}\nonumber\begin{aligned}
\alpha b_1&=b_2\\
&\vdots\\
\alpha b_{j_1-1}&=b_{j_1}\\
\alpha b_{j_1+1}&=b_{j_1+2}\\
&\vdots\\
\alpha b_{m_1-1}&=b_{m_1}\\
\alpha b_{m_1}&=b_{1}.\\
\end{aligned}\end{equation}

Setting $\gamma:=b_{j_1+1}$, we have
\begin{equation}\nonumber
\{b_{j_1+1},\dots,b_{m_1},b_{1},\dots,b_{j_1}\}=\{\gamma,\alpha\gamma,\alpha^2\gamma\dots,\alpha^{m_1-1}\gamma\}.
\end{equation}

By multiplying $\gamma$ to both sides of \eqref{eq.me6} and
substituting $\alpha$, we get
\begin{equation}\label{eq.me11}
\alpha^m\gamma+c_{m-1}\cdot\alpha^{m-1}\gamma+\cdots+c_1\cdot\alpha\gamma+\gamma=0,
\end{equation}
Since $m\leq n/3$ and $m_1\geq n/2$ by conditions $m_1+m_2=n$ and
$m_1\geq m_2$, the list
$(\gamma,\alpha\gamma,\dots,\alpha^{m}\gamma)$ is a sublist of the
list $(b_1,b_2,\dots,b_{m_1})$, which is linear dependent by
\eqref{eq.me11}. This is a contradiction. Thus, $d\leq 2$ and
wt$(m_\alpha)=$ wt$(\chi_{M_{\alpha,B}})\leq 5$. Consequently,
Theorem \ref{theo.me9} is proven completely. \qed

\section{Conclusions}
In CRTPTO 2016, a study of optimal multiplication bases with respect to the XOR-
count has been presented in \cite{BKL16}. The authors  have
been able to characterize exactly which field elements can be implemented with
one XOR operation only, the general case was left open. For small  fields of dimension smaller or equal to eight, they were able to compute the optimal bases with
the help of an exhaustive computer search. They have conjectured that for  an element
$\alpha\in \mathds{F}_{2^n}$ with two XOR-count, the minimal
polynomial $m_{\alpha}$ is of Hamming weight 5. In this paper we confirm the validity of this conjecture and prove it. The proved result can be used to improve the performance of the algorithms in lightweight cryptography.


\begin{thebibliography}{99}
\bibitem{BKL16}
Beierle, C., Kranz, T., Leander, G.: Lightweight multiplication in
$GF(2^n)$ with application to MDS matrices. In: Robshaw, M., Katz,
J. (eds.) CRYPTO 2016. LNCS, vol. 9814, pp. 625-653. Springer,
Heidelberg (2016)

\bibitem{AES}
Daemen, J., Rijmen, V.: AES Proposal: Rjindael (1998)
http://csrc.nist. gov/archive/aes/rijndael/Rijndael-ammended.pdf.

\bibitem{Abstract alg}
Dummit, D.S., Foote, R.M.: Abstract Algebra. Wiley, Hoboken (2004)

\bibitem{FSE15}
Sim, S. M., Khoo, K., Oggier, F., Peyrin, T.: Lightweight MDS
involution matrices. In: Leander, G. (ed.) FSE 2015. LNCS, vol.
9054, pp. 471-493. Springer, Heidelberg (2015)

\bibitem{Paar-et-al 2007} Paar, C., Poschmann, A., Kumar, S.,  Eisenbarth, T., and  Uhsadel, L.: A Survey of Lightweight-Cryptography Implementations.  In IEEE Design and Test of Computers, vol. 24,  pp. 522-533 (2007)

\bibitem{Pawar-Pattanshetti 2018}
 Pawar, S. V., Pattanshetti, T.R.:
Lightweight-Cryptography:  A Survey. In International Research Journal of Engineering and Technology (IRJET), Volume 05 Issue 05 (2018)

\bibitem{FSE94}
Vaudenay, S.: On the need for multipermutations: Cryptanalysis of
MD4 and SAFER. In: Preneel, B. (ed.) FSE 1994. LNCS, vol. 1008, pp.
286-297. Springer, Heidelberg (1994)


\end{thebibliography}
\end{document}